\newtheorem{theorem}{Theorem}[section]
\newtheorem{lemma}[theorem]{Lemma}
\newtheorem{problem}{Problem}
\newtheorem{remark}{Remark}
\newtheorem{example}{Example}
\newcommand{\transpose}{\mathsf{T}}
\renewcommand{\int}{{\mathbb{Z}}}
\newcommand{\real}{{\mathbb{R}}}
\newcommand{\map}[3]{#1: #2 \rightarrow #3}
\newcommand{\setdef}[2]{\{#1 \; : \; #2\}}
\newcommand{\subscr}[2]{{#1}_{\textup{#2}}}
\newcommand{\until}[1]{\{1,\dots,#1\}}
\newcommand{\fromto}[2]{\{#1,\dots,#2\}}
\newcommand{\Image}{\operatorname{Im}}
\newcommand{\Ker}{\operatorname{Ker}}
\newcommand{\mc}{\mathcal}
\renewcommand{\arraystretch}{1.1} 
\def\QEDopen{{\setlength{\fboxsep}{0pt}\setlength{\fboxrule}{0.2pt}\fbox{\rule[0pt]{0pt}{1.3ex}\rule[0pt]{1.3ex}{0pt}}}}
\def\QED{\QEDopen} % default to closed
\newenvironment{pfof}[1]{\vspace{1ex}\noindent{\bf PROOF of
    #1:}\hspace{0.5em}} {\hfill\QED\vspace{1ex}\\}
\newenvironment{proof}{\vspace{1ex}\noindent{\bf PROOF:}\hspace{0.5em}}
{\hfill\QED\vspace{1ex}\\}
\DeclareSymbolFont{bbold}{U}{bbold}{m}{n}
\DeclareSymbolFontAlphabet{\mathbbold}{bbold}
\newcommand{\vect}[1]{\mathbbold{#1}}
\newcommand{\vectorones}[1][]{\vect{1}_{#1}}
\begin{document}

\begin{frontmatter}
  % \title{\LARGE \bf Finite-Time Consensus: A Finite Fields Approach}

  % \title{\LARGE \bf Distributed Consensus over Finite Fields:
  %   \\Convergence Analysis and Network Design}

  % \title{\LARGE \bf Finite Time (Average) Consensus via Finite Field
  %   Algebra: Convergence Analysis, Network Design, and Applications}

  \title{Consensus Networks over Finite Fields}
  
  \thanks{This work was supported by NSF grants IIS-0904501 and
    CNS-1035917, and by ARO grant W911NF-11-1-0092.}

  \author[First]{F. Pasqualetti}\ead{fabiopas@engineering.ucsb.edu},
  \author[Second]{D. Borra}\ead{domenica.borra@polito.it},
  \author[First]{F. Bullo}\ead{bullo@engineering.ucsb.edu}

  \address[First]{\mbox{Center} for \mbox{Control}, \mbox{Dynamical}
    \mbox{Systems,} and \mbox{Computation}, University of California,
    Santa Barbara, USA}
  
  \address[Second]{\mbox{Dipartimento} di \mbox{Scienze}
    \mbox{Matematiche}, Politecnico di Torino, Italy}

  % \begin{keyword}                           % Five to ten keywords,  
  %   % graph partitioning, camera network, distributed algorithms, convex constrained optimization.    
  % Graph partitioning, Convex constrained optimization, Camera
  % network, Remote and distributed control.
  % \end{keyword}

\begin{abstract}
  This work studies consensus strategies for networks of agents with
  limited memory, computation, and communication capabilities. We
  assume that agents can process only values from a finite alphabet,
  and we adopt the framework of finite fields, where the alphabet
  consists of the integers $\{0, \dots ,p - 1\}$, for some prime
  number $p$, and operations are performed modulo $p$. Thus, we define
  a new class of consensus dynamics, which can be exploited in certain
  applications such as pose estimation in capacity and memory
  constrained sensor networks. For consensus networks over finite
  fields, we provide necessary and sufficient conditions on the
  network topology and weights to ensure convergence. We show that
  consensus networks over finite fields converge in finite time, a
  feature that can be hardly achieved over the field of real
  numbers. For the design of finite-field consensus networks, we
  propose a general design method, with high computational complexity,
  and a network composition rule to generate large consensus networks
  from smaller components. Finally, we discuss the application of
  finite-field consensus networks to distributed averaging and pose
  estimation in sensor networks.
\end{abstract} 

\end{frontmatter}

\section{Introduction}
Sensor and actuator networks have recently attracted interest from
different research communities, and, in the last years, classic
computation, control, and estimation problems have been reformulated to
conform the distributed nature of these networked systems
\cite{NAL:97,FB-JC-SM:09,MM-ME:10}. An important example is the so-called
\emph{consensus} problem, where members of a network aim to agree upon a
parameter of interest via distributed computation
\cite{FG-LS:10}. Consensus algorithms have found broad applicability in
many domains, including robotics \cite{WR-RWB-EMA:07}, estimation
\cite{LX-SB-SL:05}, and parallel computation \cite{DPB-JNT:97}.

In this work we focus on the consensus problem for networks of agents
with limited memory, computation, and communication capabilities. We
assume that agents are capable of storing, processing, and
transmitting exclusively elements from a finite and pre-specified
alphabet, and we conveniently model this situation with the formalism
of \emph{finite fields}, where the alphabet consists of a set of
integers, and operations are performed according to modular arithmetic
\cite{RL-HN:96}. We study linear consensus networks over finite fields
where, at each time instant, each agent updates its state as a
weighted combination over a finite field of its own value and those
received from its neighbors. Besides consensus in capacity and memory
constrained networks, our finite-field consensus method has broad
applicability to problems in cooperative control, networked systems,
and network coding, such as averaging, load balancing, and pose
estimation from relative measurements. Additionally, the use of a
finite alphabet for computation and communication makes our consensus
method easily implementable and resilient to communication noise.

\noindent
\textbf{Related work} Consensus algorithms have been proposed for
different network model, agents dynamics, and communication schemes.
Starting from the basic setup of time-independent network structure,
broadcast and synchronous communication, and unlimited communication
bandwidth, consensus algorithms have been proposed to cope, for
instance, with time-varying topologies \cite{YGS-LW-GX:08}, gossip and
asynchronous communication \cite{TCA-MEY-ADS-AS:09}, and communication
errors and link failures \cite{SK-JMF:09}. It has been shown that,
under mild connectivity assumptions on the interaction graphs, a
simple linear iteration ensures consensus, where at each time agents
update their state as a weighted average of the available neighboring
states \cite{LM:05}. While most of these approaches assume the
possibility of processing and transmitting real values, we consider
the more realistic case of finite communication bandwidth, possibly
due to digital communication channels, and memory
constraints. Moreover, we will show that topological conditions
ensuring consensus over real values and with real-valued
communications are not sufficient for consensus over a finite field.

A relevant body of literature deals with consensus over
\emph{quantized} communication channels, where values exchanged by the
agents are quantized according to a predefined quantization scheme,
and the proposed algorithms are resilient to quantization errors
\cite{RC-FB:06j,AN-AO-AO-JNT:09,RC-FF-PF-SZ:10,TL-MF-LX-JFZ:11,JL-RMM:12}.
In these works, although the data exchanged among robots is quantized,
agents perform computations over the field of real numbers. Thus, the
consensus problem with quantized information is related to our
problem, yet fundamentally different, since we allow agents to operate
on a finite field.

\emph{Logical} consensus has been studied in \cite{AF-EMV-AB:08} for
the purpose of intruder or event detection. In logical consensus, a
team of agents aims to coordinate their decisions via distributed
computation as a function of a set of logical (boolean) events. By
leveraging tools from cellular automata and convergence theory of
finite-state iteration maps, the focus of \cite{AF-EMV-AB:08} is on
the design of a synthesis technique for logical consensus systems. The
main differences between \cite{AF-EMV-AB:08} and this paper are as
follows. First, in logical consensus the agents state is a binary
variable, while in our work it takes value in an arbitrary finite set.
Second, in logical consensus agents are allowed to perform any logical
operation, such as $\{\text{and}, \text{or}, \text{not}\}$, while our
consensus algorithm makes use of only two mathematical operations,
namely modular addition and multiplication. Third and finally, in
logical consensus agents aim to agree upon a logical expression or
compact sets \cite{AF-ND-AB:11}, while finite-field consensus
algorithms may be used to compute a (non-boolean) function of the
initial states, such as the average.

A distributed consensus algorithm with integer communication and
computation is proposed and analyzed in \cite{AK-TB-RS:07}. With
respect to this work, and similarly to \cite{AO:12}, we make use of
\emph{modular arithmetic}, instead of standard arithmetic and,
therefore, we define a novel and complementary class of consensus
networks. The use of modular arithmetic is advantageous in several
applications, such as pose estimation from relative measurements
(Section \ref{sec:estimation}). Finally, networks based on modular
arithmetic are studied in \cite{SS-CH:12}, in the context of system
controllability and observability, in \cite{RK-MM:03}, in the context
of (linear) network coding, and in \cite{EB:59}, in the more general
context of finite dynamical systems.

\noindent
\textbf{Contributions} The contributions of this paper are
fourfold. 

First, we propose the use of modular arithmetic to design consensus
algorithms for networks of cooperative agents. Consensus networks over
finite fields are distributed, require limited, in fact finite,
memory, computation, and communication resources, and, as we show,
they exhibit finite time convergence. Thus, finite-field consensus
algorithms are suitable for capacity and memory constrained networks,
and for applications with time constraints.

Second, we thoroughly characterize convergence of consensus networks
over finite fields. We provide necessary and sufficient constructive
conditions on the network topology and weights to achieve
consensus. For instance, we show that a network achieves consensus
over a finite field if and only if the network matrix is
row-stochastic over the finite field, and its characteristic
polynomial is $s^{n-1} (s - 1)$. Additionally, we prove that the
convergence time of finite-field consensus networks is bounded by the
network cardinality, and that graph properties alone are not
sufficient to ensure finite-field consensus. Our analysis complements
the classic literature on real-valued consensus networks.

Third, we propose systematic methods to design consensus networks over
finite fields. In particular, we derive a general design method, with
high computational complexity, independent of the agents interaction
graph, and a network composition rule based on graph products to
generate large consensus networks from smaller components. We show
that networks generated by our composition rule exhibit a specific
structure, and maintain the convergence properties, including the
convergence time, of the underlying components.  Moreover, by using
our general network design method, we compute finite-field consensus
weights for some specific classes of interaction graphs, and we
provide a lower bound on the number of networks achieving consensus as
a function of the agents interaction graph and the field
characteristic.

Fourth and finally, we consider two applications in sensor networks,
namely averaging and pose estimation from relative measurements. In
the averaging problem agents aim to determine the average (over the
field of real numbers) of their initial values. We show that, under a
reasonable set of assumptions, the averaging problem can be solved
distributively and in finite time by using a finite-field average
consensus algorithm. In the pose estimation problem agents aim to
estimate their orientation based on (local) relative measurements. For
this problem we derive a distributed pose estimation algorithm based
on finite-field average consensus, and we characterize its
performance.

\noindent
\textbf{Paper organization} In Section \ref{sec:notation} we recall
some preliminary notions on fields, linear algebra, and graph theory.
Section \ref{sec: consensus} contains our setup and some preliminary
results on consensus over finite fields. In Section \ref{sec:analysis}
we provide necessary and sufficient conditions for consensus over
finite fields, and we propose several illustrative examples. In Section
\ref{sec:design} we describe methods to design consensus networks over
finite fields. Finally, Section \ref{sec:estimation} contains two
application scenarios, and Section \ref{sec:conclusion} concludes the
paper.

\section{Notation and Preliminary Concepts}\label{sec:notation}
In this section we recall some definitions and properties of algebraic
fields, linear algebra, and graph theory. We refer the interested
reader to \cite{RL-HN:96,GES:77,CDG-GFR:01} for a comprehensive
treatment of these subjects.

A \emph{field} $\mathbb{F}$ is a set of elements together with
\emph{addition} and \emph{multiplication} operations, such that the
following axioms hold:
\begin{itemize}
\item[(A1)] \emph{Closure} under addition and multiplication, that is, for
  all $a,b \in \mathbb{F}$, both $a + b \in \mathbb{F}$ and $a \cdot b
  \in \mathbb{F}$;
\item[(A2)] \emph{Associativity} of addition and multiplication, that is,
  for all $a,b,c \in \mathbb{F}$, it holds $a + (b + c) = (a + b) +
  c$ and $a \cdot (b \cdot c) = (a \cdot b) \cdot c$;
\item[(A3)] \emph{Commutativity} of addition and multiplication, that is,
  for all $a,b \in \mathbb{F}$, it holds $a + b = b +a$ and $a \cdot b
  = b \cdot a$;
\item[(A4)] \emph{Existence of additive and multiplicative identity}
  elements, that is, for all $a \in \mathbb{F}$, there exist $b,c \in
  \mathbb{F}$ such that $a + b = a$ and $a \cdot c = a$;
\item[(A5)] \emph{Existence of additive and multiplicative inverse}
  elements, that is, for all $a \in \mathbb{F}$, there exist $b,c \in
  \mathbb{F}$ such that $a + b = 0$ and $a \cdot c = 1$, with $a \neq
  0$;
\item[(A6)] \emph{Distributivity} of multiplication over addition, that is,
  for all $a,b,c \in \mathbb{F}$, it holds $a \cdot (b +c) = (a \cdot
  b) + (a \cdot c)$.
\end{itemize}
A field is finite if it contains a finite number of elements. A basic
class of finite fields are the fields $\subscr{\mathbb{F}}{p}$ with
characteristic $p$ a prime number, which consists of the set of
integers $\{0, \dots, p-1\}$, with addition and multiplication defined
as in \emph{modular arithmetic}, that is, by performing the operation
in the set of integers $\mathbb{Z}$, dividing by $p$, and taking the
remainder. Unless specified differently, all the operations listed in
this paper are performed in the field $\subscr{\mathbb{F}}{p}$.

Let $\map{a}{\subscr{\mathbb{F}}{p}^m}{\subscr{\mathbb{F}}{p}^n}$ be a
linear map between the vector spaces of dimensions $m$ and $n$,
respectively, over the field $\subscr{\mathbb{F}}{p}$. As a classical
result in linear algebra, the map $a$ can be represented by a matrix
$A$ with $n$ rows and $m$ columns, and elements from the field
$\subscr{\mathbb{F}}{p}$. The \emph{image} and \emph{kernel} of $A$
are defined as
\begin{align*}
  \Image(A) &:= \setdef{y \in \subscr{\mathbb{F}}{p}^n}{ y = A x , x \in
    \subscr{\mathbb{F}}{p}^m }, \;\;
  \Ker(A) := \setdef{x \in \subscr{\mathbb{F}}{p}^m}{ A x = 0},
\end{align*}
where additions and multiplications are performed modulo
$p$. Analogously, the \emph{pre-image} of a set of vectors $V
\subseteq \subscr{\mathbb{F}}{p}^n$ through $A$ is the set
\begin{align*}
  A^{-1} (V) := \setdef{x \in \subscr{\mathbb{F}}{p}^n}{v = A x,
    \text{ for all } v \in V}.
\end{align*}
For a matrix $A \in \subscr{\mathbb{F}}{p}^{n \times n}$, let
$\subscr{\sigma}{p} (A)$ denote the set of eigenvalues of $A$ in the
field $\subscr{\mathbb{F}}{p}$, that is,
\begin{align*}
  \subscr{\sigma}{p} (A) &=\setdef{\lambda \in
    \subscr{\mathbb{F}}{p}}{ \text{det}(\lambda I - A) = 0}.
\end{align*}
Analogously, let $\subscr{\mathbb{F}}{p}[s]$ denote the set of
polynomials with coefficients in $\subscr{\mathbb{F}}{p}$, and let
$P_{A} \in \subscr{\mathbb{F}}{p}[s]$ denote the characteristic
polynomial of $A$ over $\subscr{\mathbb{F}}{p}$.\footnote{Since the
  characteristic polynomial $\bar P_A(s) \in \real[s]$ contains only
  integer coefficients for any $A \in \subscr{\mathbb{F}}{p}^{n\times
    n}$, the characteristic polynomial $P_A(s) \in
  \subscr{\mathbb{F}}{p}[s]$ is $\sum_{i = 0}^n \text{mod}(\bar c_i,p)
  s^i$, where $\text{mod}(\cdot )$ is the modulus function, and $\bar
  c_i$ is the $i$-th coefficient of $\bar P_A(s)$.}  Notice that the
cardinality $| \subscr{\sigma}{p} (A) |$ may be strictly smaller than
the matrix dimension $n$; in other words, finite fields are not
\emph{algebraically closed}.

We conclude this section with some standard graph definitions. A
directed graph $\mathcal G = (\mathcal V, \mathcal E)$ consists of a
set of vertices $\mathcal V$ and a set of edges $\mathcal E \subseteq
\mathcal V \times \mathcal V$. An edge $(v,w) \in \mathcal E$ is
directed from vertex $w$ to vertex $v$. For a vertex $v \in \mc V$,
the set of in-neighbors of $v$ is defined as $\mathcal N^\text{in}_v =
\setdef{w \in \mc V}{ (v,w) \in \mc E }$, and the set of out-neighbors
as $\mathcal N^\text{out}_v = \setdef{w \in \mc V}{ (w,v) \in \mc
  E}$. The in-degree of $v \in \mc V$ equals $| \mc N_v^\text{in}|$,
whereas the out-degree of $v \in \mc V$ equals $| \mc
N_v^\text{out}|$. A path in $G$ is a subgraph $P = (\{v_1,\dots
,v_{k+1}\},\{e_1,\dots ,e_k\})$ such that $v_i \neq v_j$ for all $i
\neq j$, and $e_i = (v_i,v_{i+1})$ for each $i \in \until{k}$. We say
that the path starts at $v_1$ and ends at $v_{k+1}$, and, at times, we
identify a path by its vertex sequence $v_1,\dots , v_{k+1}$. A cycle
or closed path is a path in which the first and last vertex in the
sequence are the same, i.e., $v_1 = v_{k+1}$. The length of a path
(resp. cycle) equals the number of edges in the path (resp. cycle). A
directed graph is strongly (resp. weakly) connected if there exists a
directed (resp. undirected) path between any two vertices. Two
subgraphs of the same graph are disjoint if they have no common
vertices. A \emph{root} (resp. \emph{globally reachable node}) is a
vertex $v$ from which (resp. to which) there exists a directed path to
(resp. from) every vertex in the graph, including $v$ itself.
% A root of a graph $\mc G$ corresponds to a globally reachable node
% in the reverse graph of $\mc G$.
Finally, a directed graph is aperiodic if there is no integer greater
than one that divides the length of every cycle of the graph.

\section{Models of Consensus Networks over Finite Fields}\label{sec:
  consensus}
Consider a set of $n \in \mathbb{N}_{>0}$ agents and a finite field
$\subscr{\mathbb{F}}{p}$, for some prime number $p$. Let the agents
interaction be described by the directed graph $\mc G = (\mc V, \mc
E)$, where $i \in \mc V$ denotes the $i$-th agent, with $\mc V =
\until{n}$, and $(i,j) \in \mc E$ if there is a directed edge from
agent $j$ to agent $i$ (agent $i$ senses agent $j$, or, equivalently,
the behavior of agent $j$ affects agent $i$). We assume that each
agent is able to manipulate and transmit values from the finite field
$\subscr{\mathbb{F}}{p}$ according to a pre-specified protocol. We
focus on distributed protocols in which (i) each agent $i$ is
associated with a state $x_i \in \subscr{\mathbb{F}}{p}$, and (ii)
each agent updates its state as a weighted combination of the states
of its in-neighbors $\mc N_i^\text{in}$. Let $a_{ij} \in
\subscr{\mathbb{F}}{p}$ be the weight associated with the edge
$(i,j)$, and let $A = [a_{ij}]$, $A \in \subscr{ \mathbb{F}
}{p}^{n\times n}$, be the \emph{weighted adjacency matrix} of $\mc G$,
or simply \emph{network matrix}, where $a_{ij} = 0$ whenever $(i,j)
\not\in \mc E$. Let $\map{x}{\mathbb{N}_{\ge
    0}}{\subscr{\mathbb{F}}{p}^n}$ be the vector containing the agents
states. Then the evolution of the network state $x$ over time is
described by the iteration (or network)
\begin{align}\label{eq:autonomous}
  x(t+1) = A x(t),
\end{align}
where all operations are performed in the field
$\subscr{\mathbb{F}}{p}$. 

The \emph{transition graph} associated with the iteration
\eqref{eq:autonomous} over $\subscr{\mathbb{F}}{p}$ is defined as $\mc
G_A = (\mc V_A,\mc E_A)$, where, $\mc V_A = \setdef{v}{v \in
  \subscr{\mathbb{F}}{p}^n}$ and, for $v_i,v_j \in \mc V_A$, the edge
$(v_i, v_j) \in \mc E_A$ if and only if $v_j = A v_i$. It should be
observed that the transition graph contains $p^n$ vertices, and that
each vertex has unitary out-degree. Moreover, it can be shown that the
transition graph is composed of disjoint weakly-connected subgraphs,
and that each subgraph contains exactly one cycle, possibly of unitary
length \cite{RAHT:05}. Finally, each disjoint subgraph contains a
globally reachable node.
% (a node to which there is a directed path from every other node).
This particular structure of the transition graph
will be used to derive certain results on finite-field
consensus. Examples of transition graphs are given below in
Fig. \ref{fig:transition1} and Fig. \ref{fig:transition3}.

We say that the iteration \eqref{eq:autonomous} (or simply the network
matrix $A$) over a finite field achieves
\begin{enumerate}
\item \textbf{asymptotic consensus}, if for all initial states $x(0)
  \in \subscr{\mathbb{F}}{p}$ it holds $\lim_{t \rightarrow \infty}
  x(t) = \alpha \vectorones[]$, with $\alpha \in \subscr{\mathbb{F}}{p}$
  and $\vectorones[] = [1 \,\dots\, 1]^\transpose$;
\item \textbf{finite-time consensus}, if for all initial states $x(0)
  \in \subscr{\mathbb{F}}{p}$ there exists a finite time $T \in
  \mathbb{N}$ such that $x(T ) = x(T + \tau) = \alpha \vectorones[]$
  for all $\tau \in \mathbb{N}$, with $\alpha \in
  \subscr{\mathbb{F}}{p}$ and $\vectorones[] = [1 \,\dots\,
  1]^\transpose$.
\end{enumerate}

Consensus networks with real-valued weights and states have been
extensively studied \cite{LM:05,WR-RWB:05,FG-LS:10}. In this
work we show that real-valued consensus networks and finite-field
consensus networks exhibit different features, and particular care
needs to be taken to ensure the desired properties over finite
fields. It is clear from the above definitions that finite-time
consensus implies asymptotic consensus. We next show that the converse
is also true.

\begin{theorem}{\bf \emph{(Asymptotic consensus implies
      finite-time consensus)}}\label{thm:convergence}
  The iteration \eqref{eq:autonomous} over the field
  $\subscr{\mathbb{F}}{p}$ achieves asymptotic consensus only if it
  achieves finite-time consensus.
\end{theorem}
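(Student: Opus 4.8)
The plan is to exploit the fact that the state space $\subscr{\mathbb{F}}{p}^n$ is finite, so that every trajectory of \eqref{eq:autonomous} is eventually periodic, and then to observe that a periodic trajectory which converges must in fact be constant. Fix an arbitrary initial condition $x(0) \in \subscr{\mathbb{F}}{p}^n$ and consider its orbit under \eqref{eq:autonomous}. Since this orbit lies in the finite set $\subscr{\mathbb{F}}{p}^n$, which has cardinality $p^n$, the pigeonhole principle yields indices $t_1 \ge 0$ and $\tau \ge 1$ with $x(t_1 + \tau) = x(t_1)$. Because the map $x \mapsto Ax$ is deterministic, this forces $x(t + \tau) = x(t)$ for every $t \ge t_1$; that is, the orbit is eventually $\tau$-periodic. (Equivalently, this is the property recalled above from \cite{RAHT:05}: the trajectory of $x(0)$ in the transition graph $\mc G_A$ reaches, in finitely many steps, the unique cycle of the weakly-connected component containing $x(0)$.)

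Next I would invoke the asymptotic-consensus hypothesis: $\lim_{t \to \infty} x(t) = \alpha \vectorones[]$ for some $\alpha \in \subscr{\mathbb{F}}{p}$, where $\subscr{\mathbb{F}}{p}^n$ is given the discrete topology (equivalently, one may regard the states as the integer representatives in $\{0,\dots,p-1\}^n \subset \real^n$). For each fixed $k \in \{0,\dots,\tau-1\}$, the subsequence $t \mapsto x(t_1 + k + t\tau)$ is constant and equal to $x(t_1 + k)$; since it must also converge to $\alpha \vectorones[]$, we get $x(t_1 + k) = \alpha \vectorones[]$. Hence $x(t) = \alpha \vectorones[]$ for all $t \ge t_1$. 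Setting $T = t_1$, which is finite, we obtain $x(T) = x(T + \tau') = \alpha \vectorones[]$ for all $\tau' \in \mathbb{N}$, and since $x(0)$ was arbitrary this is exactly finite-time consensus.

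I do not anticipate a genuine obstacle: the only points demanding a little care are making precise the meaning of the limit on the finite set $\subscr{\mathbb{F}}{p}^n$, and noting that the constant $\alpha$ and the time $T$ are permitted to depend on $x(0)$, exactly as in the definition of finite-time consensus. As a consistency check, the reverse implication (finite-time consensus $\Rightarrow$ asymptotic consensus) is trivial, so the theorem shows the two notions coincide over $\subscr{\mathbb{F}}{p}$ — in contrast with the real-valued setting, where asymptotic consensus is generically not reached in finite time. An equivalent graph-theoretic phrasing of the argument: convergence forces the unique cycle in each weakly-connected component of $\mc G_A$ that supports a convergent trajectory to have unit length, i.e. to be a fixed point $\alpha \vectorones[]$ of $A$; the number of steps needed to reach that fixed point is then the desired finite consensus time $T$.
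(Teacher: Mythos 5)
Your proof is correct and rests on the same idea as the paper's: the state space $\subscr{\mathbb{F}}{p}^n$ is finite, so a trajectory that converges to a consensus point must actually reach it after finitely many steps (the paper phrases this as the trajectory being a path of finite length in the transition graph $\mc G_A$ ending at the consensus vertex). Your pigeonhole/eventual-periodicity argument and the explicit treatment of what ``$\lim$'' means in the discrete topology simply make precise what the paper's shorter proof leaves implicit, so no gap on either side.
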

\begin{proof}
  Let $\mc G_A = (\mc V_A, \mc E_A)$ be the transition graph
  associated with the iteration \eqref{eq:autonomous}. Notice that the
  state trajectory $x$ of \eqref{eq:autonomous} coincides with a path
  on $\mc G_A$ starting from the vertex $v_0 = x(0)$. Let $\mc C
  \subset \mc V_A$ be the set of consensus vertices, that is, $\mc C =
  \setdef{v}{v \in \mc V_A, v = \alpha \vectorones[], \alpha \in
    \subscr{\mathbb{F}}{p}}$. Suppose that the iteration
  \eqref{eq:autonomous} achieves consensus on the value $v_c \in \mc
  C$. Since the vertex set $\mc V_A$ is finite, the distance between
  $v_0$ and $v_c$ is also finite. Consequently, a consensus vertex is
  reached with a path on $\mc G$ of finite length, that is, a finite
  number of iterations in \eqref{eq:autonomous} are sufficient to
  achieve consensus.
\end{proof}
Following Theorem \ref{thm:convergence}, iterations over finite fields
either achieve consensus in finite time, or they are not
convergent. In view of this result, in what follows we simply use
consensus instead of finite-time and asymptotic consensus. Differently
from finite-field consensus networks, consensus networks over the
field of real numbers usually converge asymptotically. An exception is
constituted by the class of de Bruijn graphs, which have been shown to
yield finite-time consensus over the field of real numbers
\cite{JCD-RC-SZ:07}. On the other hand, de Bruijn graphs rely on a
specific interaction graph, while finite-field consensus networks
include a much broader class of interaction graphs. We conclude this
section with a simple result. A matrix $A$ over the field
$\subscr{\mathbb{F}}{p}$ is \emph{nilpotent} if $A^n=0$ and is
\emph{row-stochastic} if $A\vectorones[]=\vectorones[]$.

\begin{lemma}{\bf \emph{(Finite-field consensus
      matrices)}}\label{lemma:stochastic} Consider the iteration
  \eqref{eq:autonomous} over the field $\subscr{\mathbb{F}}{p}$. If
  consensus is achieved, then $A$ is either nilpotent or row-stochastic.
  \label{lemma: stochastic}
\end{lemma}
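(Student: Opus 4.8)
The plan is to use Theorem~\ref{thm:convergence}: from every initial condition the iteration reaches a consensus vector in finitely many steps and is thereafter fixed by $A$. First I would show that the map assigning to $x(0)$ its consensus value is linear. Indeed, if $x(0)$ and $y(0)$ reach $\alpha\vectorones[]$ and $\beta\vectorones[]$ respectively, then, choosing $t$ large enough that all of the relevant trajectories have stabilized, $A^t(x(0) + y(0)) = A^t x(0) + A^t y(0) = (\alpha + \beta)\vectorones[]$, and likewise $A^t(\lambda x(0)) = \lambda\alpha\vectorones[]$ for $\lambda \in \subscr{\mathbb{F}}{p}$; since $\vectorones[]\neq 0$, the consensus value of $x(0)$ equals $c^\transpose x(0)$ for a fixed row vector $c^\transpose \in \subscr{\mathbb{F}}{p}^{1\times n}$. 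Because $\subscr{\mathbb{F}}{p}^n$ is a finite set, there is a single time $T$ by which \emph{every} one of the $p^n$ trajectories has reached its consensus value; hence $A^t x(0) = \vectorones[]\,c^\transpose x(0)$ for all $x(0)$ and all $t \geq T$, and evaluating this identity on the standard basis vectors gives $A^t = \vectorones[]\,c^\transpose$ for every $t \geq T$.

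I would then distinguish two cases according to whether $c$ is the zero vector. If $c = 0$ then $A^T = 0$, so $A$ is nilpotent (being $n\times n$ over a field, this forces $A^n = 0$). If $c\neq 0$, I would exploit the stabilization identity $A^{T+1} = A^T$, which reads $A\vectorones[]\,c^\transpose = \vectorones[]\,c^\transpose$, i.e. $(A\vectorones[] - \vectorones[])\,c^\transpose = 0$; selecting an index $j$ with $c_j\neq 0$ and using that $c_j$ is invertible in $\subscr{\mathbb{F}}{p}$, the $j$-th column of this rank-one relation yields $A\vectorones[] = \vectorones[]$, so $A$ is row-stochastic. Combining the two cases proves the statement. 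As a consistency check, the two alternatives are mutually exclusive here: a nilpotent row-stochastic matrix would satisfy $\vectorones[] = A^n\vectorones[] = 0$.

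I do not anticipate a serious obstacle. The only point requiring care is the passage to a \emph{uniform} stabilization time $T$ valid for all initial conditions simultaneously; this is exactly where the finiteness of $\subscr{\mathbb{F}}{p}^n$ — equivalently, applying Theorem~\ref{thm:convergence} to each of the finitely many starting vectors — is used, and it is what turns per-trajectory convergence into the single rank-one identity $A^t = \vectorones[]\,c^\transpose$ that drives the dichotomy.
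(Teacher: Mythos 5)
Your proof is correct and follows essentially the same route as the paper's: both invoke Theorem~\ref{thm:convergence} to make the consensus vectors eventually fixed points of $A$, then split on whether the consensus value is nonzero for some initial state, using invertibility of nonzero elements of $\subscr{\mathbb{F}}{p}$ to conclude $A\vectorones[] = \vectorones[]$ in one case and nilpotency ($A^T = 0$ uniformly over the finitely many initial states) in the other. Your rank-one identity $A^T = \vectorones[]\,c^\transpose$ merely packages explicitly what the paper argues pointwise.
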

\begin{proof}
  Since $A$ achieves consensus, it follows from Theorem
  \ref{thm:convergence} that $A^t x(0) = A^{t+1} x(0) = \alpha
  \vectorones[]$ for some $\alpha \in \subscr{\mathbb{F}}{p}$, for all
  $x(0)$, and for all $t \ge T$, $T \in \mathbb{N}$. Then $A \alpha
  \vectorones[] = \alpha \vectorones[]$, from which we conclude that
  either $A \vectorones[] = \vectorones[]$ ($A$ is row-stochastic) or
  $\alpha = 0$ for all initial states $x(0)$ ($A$ is nilpotent).
\end{proof}
As for the case of real-valued consensus, we limit our attention to
row-stochastic network matrices. Although consensus is trivially
achieved whenever the network matrix is nilpotent, this case is of
limited interest because the consensus value is the origin
independently of the agents initial states.

\section{Analysis of Consensus Networks over Finite
  Fields}\label{sec:analysis} Conditions for consensus in real-valued
networks have been deeply investigated in the last years
\cite{LM:05,WR-RWB:05,FG-LS:10}. For instance, sufficient conditions
ensuring real-valued consensus are that the network matrix $A$ is
row-stochastic and that the associated directed graph is strongly connected
and aperiodic.
%% primitive \cite{CDM:01}.\footnote{A real-valued matrix is row-stochastic if
%%   it contains only non-negative entries, and the sum of the elements of
%%   each row is unitary. A real-valued matrix is primitive if there exists an
%%   integer $k$ such that $A^k$ has only positive entries \cite{CDM:01}.} 
The following example shows that graph-theoretic properties are not
sufficient for an iteration over a finite field to achieve consensus.

\begin{example}{\bf \emph{(Graph properties are not sufficient for
      finite-field consensus)}}\label{example:consensus}
  Consider a fully connected network with three agents over the field
  $\mathbb{F}_3$. Consider the network matrices
  \begin{align*}
    A_1 &= 
    \begin{bmatrix}
      2 & 1 & 1\\
      2 & 1 & 1\\
      2 & 1 & 1
    \end{bmatrix},
    A_2 = 
    \begin{bmatrix}
      2 & 1 & 1\\
      1 & 2 & 1\\
      1 & 2 & 1
    \end{bmatrix},
    \text{ and }
    A_3 = 
    \begin{bmatrix}
      2 & 1 & 1\\
      1 & 2 & 1\\
      1 & 1 & 2
    \end{bmatrix}.
  \end{align*}
  Notice that $A_1$, $A_2$, and $A_3$ are row-stochastic and their
  interaction graph is fully connected. It can be verified that over
  the field $\subscr{\mathbb{F}}{3}$ only the network matrix $A_1$
  achieves consensus, while $A_2$ and $A_3$ exhibit oscillatory
  dynamics for certain initial conditions. An example of oscillatory
  dynamics generated by $A_3$ is reported in Table \ref{table:a2}.

  \begin{table}[!htbp]
    \caption{Sample state trajectory for the matrix $A_3$ in Example
      \ref{example:consensus}.}
    \renewcommand{\arraystretch}{1.2}
    \begin{center}
      \resizebox{.9\linewidth}{!}{
        \begin{tabular}
          {
            >{\centering \arraybackslash \columncolor{black!05!white}}
            p{.16\columnwidth}
            >{\centering \arraybackslash} p{0.16\columnwidth} 
            >{\centering \arraybackslash \columncolor{black!05!white}}
            p{0.16\columnwidth}
            >{\centering \arraybackslash} p{0.15\columnwidth} 
            >{\centering \arraybackslash \columncolor{black!05!white}}
            p{0.16\columnwidth}
            >{\centering \arraybackslash} p{0.16\columnwidth} 
            >{\centering \arraybackslash \columncolor{black!05!white}}
            p{0.16\columnwidth}
          }
          \toprule
          $x(0)$ & $x(1)$ & $x(2)$ & $x(3)$ & $x(4)$ & $x(5)$ & $x(6)$\\
          \midrule
          \midrule
          1 & 2 & 0 & 1 & 2 & 0 & 1\\
          0 & 1 & 2 & 0 & 1 & 2 & 0\\
          0 & 1 & 2 & 0 & 1 & 2 & 0\\
          \bottomrule
        \end{tabular}}
      \label{table:a2}
    \end{center}
  \end{table}
{\hfill\QED}
\end{example}

As shown in Example \ref{example:consensus}, graph properties of the
network matrix are not sufficient to guarantee consensus for
iterations over a finite field (in short, \emph{finite-field
  consensus}). Indeed, although the considered network matrices
feature the same connectivity properties, only one of them achieves
finite-field consensus. In what follows we provide finite-field
consensus conditions based on the algebraic properties of the network
matrix, and on the topological properties of its transition graph.

The dynamic behavior of an iteration over a finite field is entirely
described by its associated transition graph. The next theorem
provides a necessary and sufficient condition for finite-field
consensus based on the transition graph.

\begin{theorem}{\bf \emph{(Transition graph of a consensus
      network)}}\label{thm:transition}
  Consider the iteration \eqref{eq:autonomous} over the field
  $\subscr{\mathbb{F}}{p}$ with row-stochastic matrix $A$, and let
  $\mc G_A = (\mc V_A, \mc E_A)$ be its associated transition
  graph. The following statements are equivalent:
  \begin{enumerate}
  \item the iteration \eqref{eq:autonomous} achieves consensus, and
  \item the transition graph $\mc G_A$ contains exactly $p$ cycles,
    % each one of unitary length, and 
    corresponding to the unitary cycles around the vertices $\alpha
    \vectorones[]$, $\alpha\in\fromto{0}{p-1}$.
  \end{enumerate}
\end{theorem}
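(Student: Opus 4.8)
The plan is to exploit the structure of the transition graph $\mc G_A$ as a \emph{functional graph}: since every vertex of $\mc G_A$ has out-degree one, each weakly-connected component of $\mc G_A$ contains exactly one cycle, and from any vertex the unique forward path enters that cycle after finitely many steps and then traverses it forever --- precisely the structure recalled in Section~\ref{sec: consensus}. I would first record the elementary observation that, since $A$ is row-stochastic, $A(\alpha\vectorones[]) = \alpha A\vectorones[] = \alpha\vectorones[]$ for every $\alpha\in\fromto{0}{p-1}$, so each of the $p$ pairwise distinct vertices $\alpha\vectorones[]$ carries a self-loop; hence $\mc G_A$ always contains at least these $p$ unitary cycles, and the real content of the theorem is whether it contains any others.

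For $(1)\Rightarrow(2)$: assume \eqref{eq:autonomous} achieves consensus, and fix an arbitrary vertex $v\in\mc V_A$. The trajectory $x(0)=v$, $x(t+1)=Ax(t)$ is the forward path in $\mc G_A$ issuing from $v$, and by Theorem~\ref{thm:convergence} it reaches some consensus vertex $\alpha\vectorones[]$ in finite time and then remains constant there. By the functional-graph structure, the set of vertices visited infinitely often along this path is exactly the cycle of the component containing $v$; since the path is ultimately constant at $\alpha\vectorones[]$, that cycle is the self-loop at $\alpha\vectorones[]$. As $v$ was arbitrary, every cycle of $\mc G_A$ is a self-loop at some $\alpha\vectorones[]$, which together with the previous paragraph gives exactly $p$ cycles, all unitary and located at $\alpha\vectorones[]$, $\alpha\in\fromto{0}{p-1}$.

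For $(2)\Rightarrow(1)$: assume $\mc G_A$ has exactly $p$ cycles, namely the self-loops at $\alpha\vectorones[]$, $\alpha\in\fromto{0}{p-1}$. Fix any $v\in\mc V_A$; the forward path from $v$ enters the unique cycle of its component after finitely many steps, and by hypothesis that cycle is the self-loop at some $\alpha\vectorones[]$. Hence there is a finite $T$ with $A^T v = \alpha\vectorones[]$, and then $A^{T+\tau} v = \alpha\vectorones[]$ for all $\tau\in\natural$. Since this holds for every initial state $v = x(0)$, the iteration achieves finite-time --- hence, by Theorem~\ref{thm:convergence}, also asymptotic --- consensus, completing the equivalence.

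The only point requiring care is the bookkeeping of cycles: one must note that the $p$ self-loops are genuinely distinct cycles (immediate, since the vectors $\alpha\vectorones[]$ are pairwise distinct over $\subscr{\mathbb{F}}{p}$) and that ``exactly $p$ cycles'' leaves no room for a longer, non-consensus cycle hidden in some component. Both are handled by the functional-graph dichotomy ``one cycle per component, every trajectory funnels into it,'' so I do not expect a substantive obstacle beyond invoking that structural fact precisely.
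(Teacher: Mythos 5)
Your proposal is correct and follows essentially the same route as the paper's proof: both arguments rest on the functional-graph structure of $\mc G_A$ (unit out-degree, one cycle per weakly-connected component), use row-stochasticity to get the $p$ self-loops at the consensus vertices $\alpha\vectorones[]$, and then observe that consensus forces every cycle to be one of these self-loops, while conversely every forward path must funnel into one of the $p$ self-loops. The only cosmetic difference is that you argue $(1)\Rightarrow(2)$ directly via the eventually-periodic orbit, whereas the paper phrases it as a contradiction with a hypothetical extra cycle.
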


  \begin{figure}
    \centering
    \includegraphics[width=1\columnwidth]{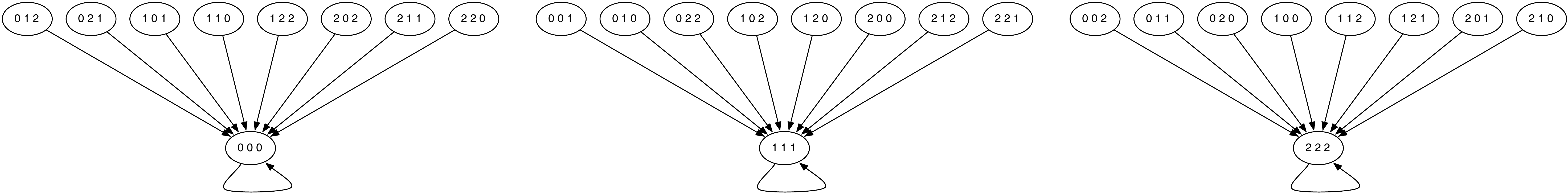}
    \caption{Transition graph $\mc G_{A_1}$ associated with the matrix
      $A_1 \in \mathbb{F}_3^{3\times 3}$ in Example
      \ref{example:consensus}. Since $\mc G_{A_1}$ contains exactly
      $3$ cycles corresponding to the self-loops around the consensus
      vertices, the network matrix $A_1$ achieves consensus (see
      Theorem \ref{thm:transition}).}
    \label{fig:transition1}
\end{figure}

  \begin{figure}
    \centering
    \includegraphics[width=.8\columnwidth]{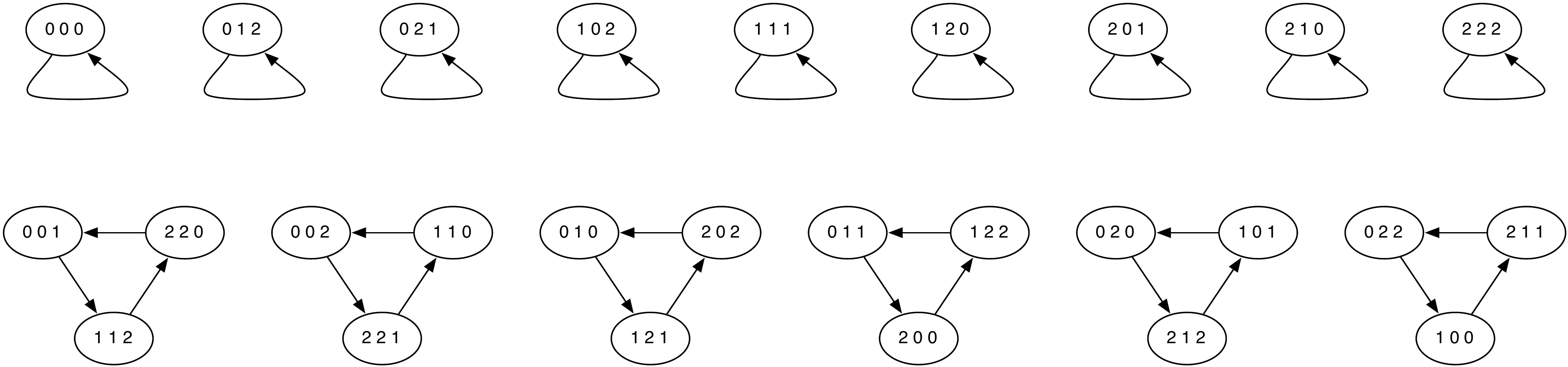}
    \caption{Transition graph $\mc G_{A_3}$ associated with the matrix
      $A_3 \in \mathbb{F}_3^{3\times 3}$ in Example
      \ref{example:consensus}. Since $\mc G_{A_3}$ contains more than
      $3$ cycles, the network matrix $A_3$ does not achieve consensus
      (see Theorem \ref{thm:transition}). The oscillatory state
      trajectory in Table \ref{table:a2} corresponds to the bottom
      right cycle in this figure.}
    \label{fig:transition3}
\end{figure}

\begin{example}{\bf \emph{(Transition graph of a consensus
      network)}}\label{thm:char_pol}
  The transition graphs associated with the matrices $A_1$ and $A_3$
  in Example \ref{example:consensus} over the field $\mathbb{F}_3$ are
  reported in Fig. \ref{fig:transition1} and
  Fig. \ref{fig:transition3}, respectively. As previously discussed,
  and as predicted by Theorem \ref{thm:transition}, the matrix $A_1$
  achieves consensus, while the matrix $A_3$ does not.   
  {\hfill\QED}
\end{example}

\begin{pfof}{Theorem \ref{thm:transition}}
  % Moreover, recall from \cite{RAHT:05} that $\mc G_A$ is a
  % directed graph composed of (possibly disconnected) trees and cycles,
  % and that each tree terminates in a cycle.

  \noindent
  \textit{(i) $\implies$ (ii)} Since the iteration achieves consensus,
  it follows from Lemma \ref{lemma: stochastic} that $A \vectorones[]
  = \vectorones[]$. Hence, the transition graph contains $p$ unitary
  cycles corresponding to the vertices $\alpha \vectorones[] \in \mc
  V_A$, with $\alpha \in\subscr{\mathbb{F}}{p}$. Suppose by
  contradiction that there exists an additional cycle $C$, and notice
  that the vertices $\alpha \vectorones[]$, with $\alpha
  \in\subscr{\mathbb{F}}{p}$, cannot be contained in $C$ since the
  out-degree of each vertex in the transition graph is exactly one
  (the transition graph is determined by the linear map $A$). Thus,
  there exists a trajectory along $C$ that does not converge to
  consensus, which contradicts the initial hypothesis.

  \noindent
  \textit{(ii) $\implies$ (i)} Notice that a state trajectory of the
  iteration \eqref{eq:autonomous} is in bijective correspondence with
  a path on the transition graph $\mc G_A$. Suppose that transition
  graph $\mc G_A$ contains exactly $p$ unitary cycles located at the
  vertices $\alpha \vectorones[] \in \mc V_A$, with $\alpha
  \in\subscr{\mathbb{F}}{p}$. Then, since each vertex in the
  transition graph has unitary out-degree, every (sufficiently long)
  path in $\mc G_A$ eventually reaches one of the cycles, and,
  consequently, every state trajectory converges to a consensus state.
\end{pfof}

Theorem \ref{thm:transition} provides a necessary and sufficient
condition for finite-field consensus based on the transition
graph. From condition (ii) in Theorem \ref{thm:transition} and the
fact that each vertex in the transition graph has unitary out-degree,
we also note that the transition graph of a consensus matrix is
composed of $p$ disjoint weakly-connected subgraphs. Moreover, by
means of \cite[Proposition 3.4]{RAHT:05}, it can be shown that
disjoint subgraphs have the same graph topology.

A verification of the convergence condition in Theorem
\ref{thm:transition} may be prohibitive for large networks, because
the size of the transition graph grows exponentially with the number
of agents in the network (the transition graph contains $p^n$ vertices
and $p^n$ edges, since each vertex has unitary out-degree). In what
follows we shall derive consensus conditions based on the network
matrix instead of its transition graph. Consider the \emph{inverse
  recursion}
\begin{align}\label{eq:recursion}
  \mc S^{t+1}_\alpha = A^{-1} (\mc S^t_\alpha),
\end{align}
where $\mc S^t_\alpha\subset\subscr{\mathbb{F}}{p}^n$ for all times
$t$ and $\mc S^0_\alpha = \left\{\alpha \vectorones[]\right\}$. Notice
that the inverse recursion defines a sequence of sets, and that the
set $\mc S^t_\alpha$ contains the initial states converging to the
consensus value $\alpha \vectorones[]$ in at most $t$ iterations. We
say that the recursion \eqref{eq:recursion} is convergent with
limiting set $\mc S_\alpha$ if there exists $T < n$ satisfying $\mc
S_\alpha = \mc S_\alpha^{T} = \mc S_\alpha^{T + 1}$. The following
theorem exploits the inverse recursion and the structure of the
transition graph to characterize finite-field consensus.

\begin{theorem}{\bf \emph{(Recursion subspaces of a consensus
      network)}}\label{thm:recursion} Consider the iteration
  \eqref{eq:autonomous} over the field $\subscr{\mathbb{F}}{p}$ with
  row-stochastic matrix $A$. The following statements are equivalent:
  \begin{enumerate}
  \item the iteration \eqref{eq:autonomous} achieves consensus,
  \item there exists $\alpha \in \subscr{\mathbb{F}}{p}$ such that the
    recursion \eqref{eq:recursion} is convergent and the limiting set
    $S_\alpha$ satisfies $| \mc S_{\alpha} | = p^{n-1}$, and
  \item for all $\alpha \in \subscr{\mathbb{F}}{p}$ the recursion
    \eqref{eq:recursion} is convergent and each limiting set $S_\alpha$
    satisfies $| \mc S_{\alpha} | = p^{n-1}$.
  \end{enumerate}
\end{theorem}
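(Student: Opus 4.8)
The plan is to reduce all three statements to one algebraic condition on $A$ — that the eigenvalue $0$ has algebraic multiplicity exactly $n-1$, equivalently $P_A(s)=s^{n-1}(s-1)$ — and then to verify that this condition characterizes consensus. The observation that makes this work, and that I would prove first, is an explicit description of the sets generated by the inverse recursion. By induction $\mc S_\alpha^{t}$ is the $t$-fold preimage $\{x\in\subscr{\mathbb{F}}{p}^n : A^t x=\alpha\vectorones[]\}$, and since $A\vectorones[]=\vectorones[]$ forces $A^t(\alpha\vectorones[])=\alpha\vectorones[]$, this preimage is the nonempty affine set $\alpha\vectorones[]+\Ker(A^t)$. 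Hence $|\mc S_\alpha^{t}|=|\Ker(A^t)|=p^{\dim\Ker(A^t)}$, a number independent of $\alpha$, so the whole recursion is governed by the ascending chain of subspaces $\Ker(A)\subseteq\Ker(A^2)\subseteq\cdots$.

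Second, I would show that the ``convergent'' clause in (ii) and (iii) holds unconditionally, so that only the cardinality requirement carries content. The kernel chain is nondecreasing and stabilizes at some index $\nu$; moreover $\nu\le n-1$, for otherwise the chain would be strictly increasing through step $n$, forcing $\dim\Ker(A^n)=n$, i.e.\ $A^n=0$ — impossible here since $A^n\vectorones[]=\vectorones[]\neq 0$, a row-stochastic matrix being non-nilpotent. Thus for every $\alpha$ the recursion \eqref{eq:recursion} converges in the paper's sense (with $T=\nu<n$) to a limiting set of size $p^{m_0}$, where $m_0:=\dim\Ker(A^\nu)$ is the algebraic multiplicity of the eigenvalue $0$. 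In particular (ii) and (iii) are each equivalent to the single equality $m_0=n-1$.

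Third, I would tie $m_0=n-1$ to consensus by a counting argument. The $p$ limiting sets $\mc S_\alpha=\alpha\vectorones[]+\Ker(A^\nu)$, $\alpha\in\fromto{0}{p-1}$, are pairwise disjoint cosets of the single subspace $\Ker(A^\nu)$, disjoint because $(\alpha-\beta)\vectorones[]\notin\Ker(A^\nu)$ for $\alpha\neq\beta$ (as $A^\nu\vectorones[]=\vectorones[]\neq0$); hence $|\bigcup_\alpha\mc S_\alpha|=p\cdot p^{m_0}=p^{m_0+1}$. The iteration achieves consensus if and only if $\bigcup_\alpha\mc S_\alpha=\subscr{\mathbb{F}}{p}^n$: if consensus holds, every $x(0)$ reaches some $\alpha\vectorones[]$ in finitely many steps and remains there, so $x(0)\in\mc S_\alpha^{T_0}\subseteq\mc S_\alpha$; conversely $x(0)\in\mc S_\alpha=\mc S_\alpha^{\nu}$ means $A^\nu x(0)=\alpha\vectorones[]$, a fixed point of $A$, so consensus is reached by time $\nu$. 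Therefore consensus $\iff p^{m_0+1}=p^n\iff m_0=n-1$; combined with $A\vectorones[]=\vectorones[]$ this also forces $P_A(s)=s^{n-1}(s-1)$, recovering the criterion announced in the Introduction. Chaining these equivalences with the second paragraph yields (i) $\Leftrightarrow$ (ii) $\Leftrightarrow$ (iii).

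Most of this is routine manipulation of preimages and coset counting; the one point that genuinely needs attention is the strict inequality $T<n$ in the definition of a convergent recursion, which is precisely where row-stochasticity (i.e.\ non-nilpotency of $A$) is indispensable. As an alternative to the counting in the third paragraph, one could invoke Fitting's decomposition $\subscr{\mathbb{F}}{p}^n=\Ker(A^\nu)\oplus\Image(A^\nu)$: here $\vectorones[]\in\Image(A^\nu)$, and when $m_0=n-1$ this forces $\Image(A^\nu)=\operatorname{span}(\vectorones[])$ with $A|_{\Ker(A^\nu)}$ nilpotent of index $\le n-1$, so $A^{n-1}x(0)\in\operatorname{span}(\vectorones[])$ for every $x(0)$ and finite-time consensus follows directly.
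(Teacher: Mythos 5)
Your proof is correct, but it takes a genuinely different route from the paper's. The paper argues on the transition graph $\mc G_A$: it invokes the structure results of \cite{RAHT:05} (the transition graph of a linear map over $\subscr{\mathbb{F}}{p}$ is a product of a tree with cycles, so its weakly-connected components are identical) to obtain the equivalence of (ii) and (iii) and the count $|\mc S_0|+\dots+|\mc S_{p-1}|=p^n$, and it passes between consensus and cycle structure via Theorem \ref{thm:transition}. You instead work purely linear-algebraically: the identity $\mc S_\alpha^t=\alpha\vectorones[]+\Ker(A^t)$ (which rests on the intended reading $A^{-1}(V)=\setdef{x}{Ax\in V}$ of the paper's preimage definition, together with $A\vectorones[]=\vectorones[]$) reduces everything to the stabilizing kernel chain, and your coset count $p\cdot p^{m_0}=p^n$ replaces the appeal to \cite{RAHT:05}. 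Each step checks out: non-nilpotency of a row-stochastic $A$ gives stabilization index $\nu\le n-1$, hence $T<n$ as the paper's definition of convergence requires; the cosets $\alpha\vectorones[]+\Ker(A^\nu)$ are pairwise disjoint because $A^\nu\vectorones[]=\vectorones[]\neq 0$; and the equivalence between consensus and $\bigcup_\alpha\mc S_\alpha=\subscr{\mathbb{F}}{p}^n$ is sound in both directions (the forward one using Theorem \ref{thm:convergence}). Your approach buys several things the paper leaves implicit: it is self-contained (no external structure theorem on transition graphs); it shows that for any row-stochastic $A$ the recursion \eqref{eq:recursion} is automatically convergent with $|\mc S_\alpha|$ independent of $\alpha$, so the only real content of (ii)/(iii) is the cardinality condition and (ii)$\Leftrightarrow$(iii) becomes immediate; and, since $m_0=\dim\Ker(A^\nu)$ is the algebraic multiplicity of the eigenvalue $0$, it links directly to the characteristic-polynomial criterion of Theorem \ref{thm:char_pol} (your closing Fitting-decomposition remark even recovers the convergence-time bound of Theorem \ref{thm:convergence_value}). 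What the paper's graph-theoretic proof buys in exchange is the finer combinatorial picture of the transition graph --- the $p$ identical disjoint components --- which it reuses elsewhere.
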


\begin{example}{\bf \emph{(Inverse recursion for a finite-field
      consensus network)}}\label{example:recursion}
  For the matrix $A_1 \in \mathbb{F}_3^{3 \times 3}$ in Example
  \ref{example:consensus}, the set $\mc S_1$ generated by the inverse
  recursion \eqref{eq:recursion} is
  \begin{align*}
    \mc S_1 =
    \left\{
      \begin{bmatrix}
        1 \\ 1 \\ 1
      \end{bmatrix}\!,\!
      \begin{bmatrix}
        0 \\ 0 \\ 1
      \end{bmatrix}\!,\!
      \begin{bmatrix}
        0 \\ 1 \\ 0
      \end{bmatrix}\!,\!
      \begin{bmatrix}
        0 \\ 2 \\ 2
      \end{bmatrix}\!,\!
      \begin{bmatrix}
        1 \\ 0 \\ 2
      \end{bmatrix}\!,\!
      \begin{bmatrix}
        1 \\ 2 \\ 0
      \end{bmatrix}\!,\!
      \begin{bmatrix}
        2 \\ 0 \\ 0
      \end{bmatrix}\!,\!
      \begin{bmatrix}
        2 \\ 1 \\ 2
      \end{bmatrix}\!,\!
      \begin{bmatrix}
        2 \\ 2 \\ 1
      \end{bmatrix}
    \right\}.
  \end{align*}
  Because $|\mc S_1| = 3^2$, the network matrix $A_1$ achieves
  consensus due to Theorem \ref{thm:recursion}. Instead, for the
  network matrix $A_3 \in \mathbb{F}_3^{3 \times 3}$ in Example
  \ref{example:consensus}, the inverse recursion yields $\mc S_1 = \{
  \vectorones[] \}$, so that $A_3$ does not achieve consensus.
  {\hfill\QED}
\end{example}

\begin{pfof}{Theorem \ref{thm:recursion}} Consider the transition
  graph $\mc G_A = (\mc V , \mc E)$, and define the reverse graph
  $\bar{\mc G}_A = (\mc V, \bar{\mc E})$, where $(i,j) \in \bar{\mc
    E}$ if and only if $(j,i) \in \mc E$.  Notice that the recursion
  \eqref{eq:recursion} is convergent if and only if
  % , starting from the vertex $\alpha \vectorones[]$, there exists no
  % path of infinite length with at least two distinct vertices in
  % $\bar{\mc G}_A$, that is, if and only if
  $\bar{\mc G}_A$ contains no cycle of length greater than $1$
  reachable from $\alpha \vectorones[]$. Recall from \cite[Theorem
  1]{RAHT:05} that $\mc G_A$ (resp. $\bar{\mc G}_A$) is obtained as
  the graph product of a tree by a set of cycles. Hence, the graph
  $\mc G_A$ (resp. $\bar{\mc G}_A$) is composed of disjoint
  weakly-connected subgraphs, and disjoint subgraphs have the same
  structure. From this argument we conclude that (ii) and (iii) are
  equivalent.

  \noindent
  \textit{(i) $\implies$ (ii)} Since $A$ achieves consensus, the graph
  $\mc G_A$ contains exactly $p$ unitary cycles corresponding to the
  consensus vertices (see Theorem \ref{thm:transition} and
  Fig. \ref{fig:transition1}). By \cite[Theorem 1]{RAHT:05}, the above
  reasoning, and the fact that $A$ achieves consensus, it follows that
  $|\mc S_0| + \dots + |\mc S_{p-1}| = p^n$, and that $|\mc S_\alpha|
  = p^{n - 1}$ for all $\alpha \in \subscr{\mathbb{F}}{p}$.

  \noindent
  \textit{(ii) $\implies$ (i)} Since $A \vectorones[] =
  \vectorones[]$, the transition graph contains $p$ cycles of unitary
  length located at the consensus vertices. Let the recursion
  \eqref{eq:recursion} be convergent for some $\alpha \in
  \subscr{\mathbb{F}}{p}$. From \cite[Theorem 1]{RAHT:05}, the graph
  $\mc G_A$ contains $p$ identical, disjoint, weakly-connected
  subgraphs, each one terminating in a consensus vertex. Since $| \mc
  S_\alpha| = p^{n - 1}$, it follows that consensus is achieved from
  $p^n$ states (every initial state), which concludes the proof.
\end{pfof}

According to Theorem \ref{thm:recursion}, the convergence of the
network \eqref{eq:autonomous} can be determined by iterating the
inverse recursion \eqref{eq:recursion} for some $\alpha \in
\subscr{\mathbb{F}}{p}$. This computation does not require analyzing
the transition graph. Our last and most explicit condition for
finite-field consensus is based upon the characteristic polynomial of
the network matrix (computed over the finite field).

\begin{theorem}{\bf \emph{(Characteristic polynomial of a consensus
      network)}}\label{thm:char_pol} Consider the iteration
  \eqref{eq:autonomous} over the finite field $\subscr{\mathbb{F}}{p}$ with
  row-stochastic matrix $A$. The following statements are equivalent:
  \begin{enumerate}
  \item the iteration \eqref{eq:autonomous} achieves consensus, and
  \item $P_A (s) = s^{n - 1}(s - 1)$.
  \end{enumerate}
\end{theorem}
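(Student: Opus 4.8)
The plan is to pin down the characteristic polynomial by controlling the multiplicities of the only two eigenvalues that can occur, namely $0$ and $1$. Since $A$ is row-stochastic, $A\vectorones[] = \vectorones[]$, so $1\in\subscr{\sigma}{p}(A)$ and $(s-1)$ divides $P_A(s)$; write $P_A(s) = (s-1)\,q(s)$ with $q\in\subscr{\mathbb{F}}{p}[s]$ monic of degree $n-1$. Thus the claim $P_A(s) = s^{n-1}(s-1)$ is equivalent to $q(s) = s^{n-1}$, i.e. to the statements that $0$ is the only root of $q$ and that it occurs with full multiplicity. Note also that, since $p$ is prime, $1\neq 0$ in $\subscr{\mathbb{F}}{p}$, so $(s-1)\nmid s^{k}$ for every $k\ge 0$; this elementary remark is used in both directions.

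For (ii)$\Rightarrow$(i), assume $P_A(s) = s^{n-1}(s-1)$. By the Cayley--Hamilton theorem over $\subscr{\mathbb{F}}{p}$ we get $A^{n-1}(A-I) = 0$, hence $A^{t} = A^{n-1}$ for all $t\ge n-1$, so every trajectory satisfies $x(t) = A^{n-1}x(0)$ for $t\ge n-1$ and is therefore constant after at most $n-1$ steps. The limit point $A^{n-1}x(0)$ lies in $\Ker(A-I)$, because $A\bigl(A^{n-1}x(0)\bigr) = A^{n}x(0) = A^{n-1}x(0)$. Since $1$ occurs with algebraic multiplicity exactly $1$ in $P_A(s) = s^{n-1}(s-1)$, its geometric multiplicity is also $1$, so $\Ker(A-I) = \mathrm{span}\{\vectorones[]\} = \setdef{\alpha\vectorones[]}{\alpha\in\subscr{\mathbb{F}}{p}}$. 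Hence $x(t) = \alpha\vectorones[]$ for all $t\ge n-1$ and some $\alpha$, i.e. the iteration achieves finite-time (and so asymptotic) consensus.

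For (i)$\Rightarrow$(ii), assume consensus is achieved. I would invoke Fitting's lemma: for $k$ sufficiently large (and by Theorem \ref{thm:recursion} the stabilization index of the kernel chain is below $n$), $\subscr{\mathbb{F}}{p}^{n} = \Ker(A^{k})\oplus\Image(A^{k})$, with $A$ nilpotent on $\Ker(A^{k})$ and invertible on $\Image(A^{k})$; writing $A$ in a basis adapted to this $A$-invariant splitting gives $P_A(s) = s^{m}\,r(s)$ with $m = \dim\Ker(A^{k})$ and $r$ monic of degree $n-m$ satisfying $r(0)\neq 0$. Now observe that $\Ker(A^{k})$ is precisely the limiting set $\mc S_0$ of the inverse recursion \eqref{eq:recursion} for $\alpha=0$ (indeed $\mc S_0^{t} = \Ker(A^{t})$ by induction), so Theorem \ref{thm:recursion} yields $|\mc S_0| = p^{n-1}$, i.e. $m = n-1$ and $\deg r = 1$. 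Finally, since $(s-1)\mid P_A(s)$ while $(s-1)\nmid s^{n-1}$, the linear factor $(s-1)$ must divide $r$; being monic of degree $1$, $r(s) = s-1$, and therefore $P_A(s) = s^{n-1}(s-1)$.

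In both directions the crux is controlling the multiplicity of the eigenvalue $1$ — equivalently, showing that the fixed-point set of $A$ consists of exactly the $p$ consensus vectors and nothing more. This is the step I expect to require the most care: in the forward direction it rests on the exact count $|\mc S_0| = p^{n-1}$ from Theorem \ref{thm:recursion} combined with Fitting's decomposition, while in the converse it follows from the hypothesis via the inequality (geometric multiplicity)~$\le$~(algebraic multiplicity). It is also worth checking the degenerate case $n=1$ (where $P_A(s)=s-1$, read with $s^{0}=1$, and consensus holds trivially) and confirming that every tool invoked — Cayley--Hamilton, Fitting's lemma, multiplicativity of the characteristic polynomial across an invariant direct sum, and the link between minimal and characteristic polynomials — remains valid over the finite, non--algebraically-closed field $\subscr{\mathbb{F}}{p}$.
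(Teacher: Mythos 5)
Your proof is correct, but it follows a genuinely different route from the paper's appendix proof. For (ii)$\Rightarrow$(i) you use Cayley--Hamilton, $A^{n-1}(A-I)=0$, to get constancy of every trajectory after $n-1$ steps, and then pin the limit to $\mathrm{span}\{\vectorones[]\}$ via geometric $\le$ algebraic multiplicity of the eigenvalue $1$; the paper instead changes basis with $V=[V_1\;\vectorones[]]$, exhibits a block-triangular form whose upper block is nilpotent, and argues convergence from that decomposition. Both are valid over $\subscr{\mathbb{F}}{p}$, and yours additionally yields the convergence bound $T\le n-1$ with no extra work. For (i)$\Rightarrow$(ii) the difference is more substantial: the paper factors $P_A$ into irreducible factors and invokes the cycle-structure results of the transition-graph literature (counting cycles contributed by each primary component via orders of polynomials) to contradict Theorem \ref{thm:transition} when $k>1$ or $r>0$; you instead combine Fitting's decomposition with Theorem \ref{thm:recursion}, using the clean identification $\mc S_0^t=\Ker(A^t)$ (with the standard preimage reading of $A^{-1}(\cdot)$, which is what the paper intends) so that $|\mc S_0|=p^{n-1}$ forces the algebraic multiplicity of $0$ to be exactly $n-1$, after which $(s-1)$, which divides $P_A$ by row-stochasticity and is coprime to $s^{n-1}$, must be the remaining monic linear factor. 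This is not circular, since Theorem \ref{thm:recursion} is established before and independently of Theorem \ref{thm:char_pol}; the price is that the dependence on the transition-graph structure results is now indirect (through Theorem \ref{thm:recursion}) rather than explicit, so your argument trades the paper's structural explanation of why extra irreducible factors create oscillatory cycles for a shorter, more linear-algebraic derivation. All tools you invoke (Cayley--Hamilton, Fitting, multiplicativity of the characteristic polynomial over an invariant direct sum, multiplicity inequalities) are indeed valid over the non--algebraically-closed field $\subscr{\mathbb{F}}{p}$, and your handling of the degenerate case $n=1$ is fine.
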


\begin{example}{\bf \emph{(Characteristic polynomial of consensus
      matrices)}}\label{example:char_pol}
  Consider the network matrices in Example \ref{example:consensus}
  over the field $\mathbb{F}_3$. It can be verified that
  \begin{align*}
    P_{A_1}(s) &= s^2 (s - 1), \;\;
    P_{A_2}(s) = s (s^2 - 2s + 1),\text{ and }\;
    P_{A_3}(s) = s^3 - 1.
  \end{align*}
  As predicted by our previous analysis and by Theorem
  \ref{thm:char_pol}, only the network matrix $A_1$ achieves
  consensus.
  {\hfill\QED}
\end{example}

The proof of this theorem is postponsed to the Appendix.  Theorem
\ref{thm:char_pol} is equivalently restated as follows: $A$ achieves
finite-field consensus if and only if $\subscr{\sigma}{p} (A) =
\{1,0,\dots,0\}$. In other words, the eigenvalues of a finite-field
matrix achieving consensus are all contained in the considered finite
field and, consequently, every finite-field matrix achieving consensus
can be represented in Jordan canonical form via a similarity
transformation; see~\cite{CDM:01} and~\cite[Theorem 3.5]{PS:10}. We
conclude this section by characterizing the convergence value of a
finite-field consensus network.

\begin{theorem}{\bf \emph{(Finite-field consensus time and
      value)}}\label{thm:convergence_value}
  %%%% 
  Consider the iteration \eqref{eq:autonomous} over the finite field
  $\subscr{\mathbb{F}}{p}$ with row-stochastic matrix $A$ and with initial
  state $x(0)\in\subscr{\mathbb{F}}{p}^n$. Assume the iteration
  \eqref{eq:autonomous} achieves consensus.
  %%%% 
  Let $T < n$ denote the dimension of the largest Jordan block
  associated with the eigenvalue $0$. Let
  $\pi\in\subscr{\mathbb{F}}{p}^n$ be the unique eigenvector
  satisfying $\pi A = \pi$ and $\pi \vectorones[] = 1$. Then
  \begin{align*}
    A^T = \vectorones[] \pi,
  \end{align*}
  so that consensus is achieved at the value $\pi x(0)$ after $T$
  iterations. Moreover, the $i$-th component of $\pi$ is nonzero only
  if the $i$-th vertex of the directed graph associated with $A$ is a
  root.
\end{theorem}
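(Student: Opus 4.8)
The plan is to exploit the characterization of Theorem~\ref{thm:char_pol} (equivalently $\subscr{\sigma}{p}(A) = \{1, 0, \dots, 0\}$) so that $A$ is similar over $\subscr{\mathbb{F}}{p}$ to a matrix in Jordan canonical form with one block of size $1$ for the eigenvalue $1$ and a collection of nilpotent Jordan blocks for the eigenvalue $0$, the largest of which has size $T$. First I would establish $T < n$: since the eigenvalue $1$ contributes a Jordan block of size exactly $1$ (a row-stochastic matrix always has $\vectorones[]$ as an eigenvector for $1$, and by Theorem~\ref{thm:char_pol} the algebraic multiplicity of $0$ is $n-1$), the nilpotent part acts on an $(n-1)$-dimensional invariant complement, so $T \le n-1 < n$. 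Next, writing $A = Q\,\mathrm{diag}(1, N)\,Q^{-1}$ with $N$ nilpotent of index $T$ (so $N^T = 0$ but $N^{T-1}\neq 0$), I get $A^T = Q\,\mathrm{diag}(1,0)\,Q^{-1}$, which is a rank-one idempotent. Since $A^T A = A^{T+1} = A^T$ and $A A^T = A^T$ (both follow from $A$ commuting with $Q\,\mathrm{diag}(1,0)\,Q^{-1}$, or directly from the block form), the column space of $A^T$ is spanned by a right eigenvector of $A$ for eigenvalue $1$, namely $\vectorones[]$, and the row space is spanned by a left eigenvector $\pi$ for eigenvalue $1$; hence $A^T = \vectorones[]\,\pi$ up to scalar. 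Normalizing by $\pi\vectorones[] = 1$ (this scalar is a unit in $\subscr{\mathbb{F}}{p}$ because $A^T$ is a nonzero idempotent, so $A^T \vectorones[] = \vectorones[](\pi\vectorones[])$ must be nonzero), we fix $A^T = \vectorones[]\pi$. The uniqueness of $\pi$ follows because the left eigenspace for $1$ is one-dimensional (again from the Jordan structure), and then the normalization pins it down. Consequently $x(T) = A^T x(0) = \vectorones[]\,(\pi x(0))$, i.e.\ consensus at value $\pi x(0)$, reached in $T$ steps; minimality of $T$ follows from $A^{T-1} \neq \vectorones[]\pi$ since $N^{T-1}\neq 0$.

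For the final claim about the support of $\pi$, I would argue graph-theoretically. The relation $\pi A = \pi$ says $\pi_j = \sum_i \pi_i a_{ij}$, i.e.\ $\pi_j$ is a combination of the $\pi_i$ over in-neighbors $i$ of $j$ in the graph of $A$ — equivalently, $\pi_i$ influences $\pi_j$ only along edges $(i,j)\in\mathcal E$, so $\pi$ is "transported forward" along directed paths. The cleaner route: $A^T = \vectorones[]\pi$ means the $(k,i)$ entry of $A^T$ equals $\pi_i$ for every $k$; but $(A^T)_{ki}\neq 0$ requires a directed walk of length exactly $T$ from vertex $i$ to vertex $k$ in $\mathcal G$ (the standard combinatorial interpretation of powers of the adjacency matrix, valid over $\subscr{\mathbb{F}}{p}$ in the sense that a nonzero entry forces the existence of such a walk). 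Thus if $\pi_i \neq 0$, then for every vertex $k$ there is a directed walk from $i$ to $k$; since $T \ge 1$ whenever $n \ge 2$ (if $T=0$ then $A = \vectorones[]\pi$ already and $n=1$), this walk can be taken to reach $i$ itself as well, so $i$ is a root.

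The main obstacle I anticipate is the walk-counting argument over a finite field: a nonzero $(k,i)$ entry of $A^T$ does guarantee at least one nonzero term $a_{k\ell_{T-1}}\cdots a_{\ell_1 i}$ in the expansion, hence a directed walk, so that direction is fine — but one must be careful that cancellation could make $(A^T)_{ki}$ vanish even when walks exist, which is why the implication only goes one way ("nonzero only if", not "if and only if"), matching the theorem's statement precisely. A secondary subtlety is handling degenerate cases ($n=1$, or $T=0$) and confirming that the scalar $\pi\vectorones[]$ is indeed invertible rather than zero; this is where I would lean on $A^T$ being a nonzero idempotent of rank one together with row-stochasticity ($A^T\vectorones[] = A^{T-1}\vectorones[] = \cdots = \vectorones[]$), which forces $\pi\vectorones[] = 1$ directly and sidesteps the normalization ambiguity.
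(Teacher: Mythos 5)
Your proof is correct, and for the central identity $A^T = \vectorones[]\pi$ it follows essentially the paper's own route: both arguments use Theorem \ref{thm:char_pol} to place all eigenvalues of $A$ in $\subscr{\mathbb{F}}{p}$, pass to the Jordan canonical form over the field with a single $1\times 1$ block for the eigenvalue $1$ (eigenvector $\vectorones[]$) and nilpotent blocks of size at most $T<n$ for the eigenvalue $0$, and read off $A^T=\vectorones[]\pi$ with $\pi$ a left $1$-eigenvector, normalized via $A^T\vectorones[]=\vectorones[]$ to obtain $\pi\vectorones[]=1$. The only genuine divergence is the final claim on the support of $\pi$: you deduce it directly from the identity $A^T=\vectorones[]\pi$, noting that a nonzero entry $(A^T)_{ki}=\pi_i$ forces a nonzero product $a_{k\ell_{T-1}}\cdots a_{\ell_1 i}$ and hence a directed walk from $i$ to every vertex $k$ (and you correctly observe that cancellation modulo $p$ only threatens the converse implication, which is not claimed); the paper instead argues by contradiction, taking $x(0)$ supported on a non-root vertex $i$ and using that an agent $j$ unreachable from $i$ is unaffected, so that consensus for all initial states forces $\pi_i=0$. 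The two arguments rest on the same underlying fact --- no influence without a directed walk --- but yours is somewhat more self-contained, since it does not invoke the ``unaffected agent'' observation as a separate implicitly combinatorial step, while the paper's contrapositive argument is shorter; your additional care with the degenerate cases ($n=1$, $T=0$) and with the invertibility of the scalar $\pi\vectorones[]$ is sound and only adds rigor.
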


\begin{proof}
  Since $A$ achieves consensus, we have $\subscr{\sigma}{p} (A) =
  \{1,0,\dots,0\}$, and $A$ admits a Jordan canonical form $J_A =
  V^{-1} A V$ over $\subscr{\mathbb{F}}{p}$. Moreover, the matrix $A$
  converges in $T < n$ iterations. The next part of the proof follows
  the reasoning in \cite[Theorem 3]{ROS-RMM:03c}. Let the first column
  of $T$ be $\vectorones[]$, and let the matrix $J_A$ have only zero
  elements, except for the unitary entry in position $(1,1)$. Since
  $V^{-1} A = J_A V^{-1}$, the first row of $V^{-1}$, say $\pi$,
  satisfies $\pi A = \pi$. Then $A^T = V J_A^T V^{-1} = \vectorones[]
  \pi$. Since $A \vectorones[] = \vectorones[]$, it follows that
  $\vectorones[] = A^T \vectorones[] = \vectorones[] \pi
  \vectorones[]$, and consequently $\pi \vectorones[] = 1$. To show
  the last statement, let $\mc G$ be the directed graph associated
  with $A$, and let $i$ be a vertex of $\mc G$. Assume that $i$ is not
  a root of $\mc G$, and let the initial state $x(0)$ be all zeros,
  except for the $i$-th component. Since $i$ is not a root, there
  exists a node $j$ which is not reachable from $i$, and,
  consequently, the value of the $j$-th agent is not affected by the
  $i$-th agent. Since $A$ achieves consensus for all initial states,
  the $j$-th entry of $\vectorones \pi x(0)$ needs to be zero, from
  which the statement follows.
\end{proof}

Observe that Theorem \ref{thm:convergence_value} is not a direct
consequence of the theory of non-negative matrices over the field of
real numbers \cite{CDM:01}. In fact, if regarded as a real-valued
matrix, a finite-field consensus matrix is generally unstable.

\section{Design of Consensus Networks over Finite
  Fields}\label{sec:design}
In Section \ref{sec:analysis} we characterize the convergence of
consensus networks over finite fields. With respect to consensus
networks over the field of real numbers, finite-field consensus
networks require less computational effort and communication
bandwidth, and they converge in a finite number of iterations. On the
other hand, convergence conditions for finite-field consensus networks
depend on the numerical entries of the network matrix (Theorem
\ref{thm:char_pol}), and not only on the connectivity properties of
the underlying graph as in the case of consensus networks with real
values. For this reason, the design of finite-field consensus networks
deserves particular attention. In this section we describe methods to
design finite-field consensus networks, and we discuss their
limitations. First, we elaborate on Theorem \ref{thm:char_pol} to
describe a general design method, and we propose a solution for
particular interaction graphs. Then we present a scalable composition
rule to generate finite-field consensus networks from smaller
consensus components.

\subsection{Network design via characteristic polynomial}
The objective of this section is to design a finite-field consensus
matrix $A$ whose sparsity patter is compatible with a given agents
interaction graph $\mc G = (\mc V, \mc E)$, that is, to design the
network matrix $A = [a_{ij}]$, $a_{ij} \in \subscr{\mathbb{F}}{p}$,
where $p$ is a given prime number, and $a_{ij} \neq 0$ only if $(i,j)$
is an edge of $\mc G$. Recall from Theorem \ref{thm:char_pol} that the
network matrix $A$ achieves consensus if and only if its
characteristic polynomial is $P_A(s) = s^n (s - 1)$ and $A \vectorones[]
= \vectorones[]$. Since two polynomials are equal if and only if all
coefficients are equal, the entries of $A$ can be determined by
simultaneously solving the following equations:
\begin{align}\label{eq:diophantine}
  \left\{
  \begin{array}{ll}
    A &\in \subscr{\mathbb{F}}{p}^{n \times n}, \text{ with } a_{ij}
    = 0 \text{ if } (i,j) \not\in \mc E,\\
    \vectorones[] &= A \vectorones[],\\
%    1 &= c(A,n),\\
    1 &= -c(A,n-1),\\
    0 &= c(A,j), \;\; j \in\fromto{0}{n-2},
  \end{array}
  \right.
\end{align}
where $c(A,d)$ is the coefficient of the monomial of degree $d$ in the
(parametric) characteristic polynomial $P_A$. 

\begin{example}{\bf \emph{(Design for networks with $2$
      agents)}}\label{remark:complete_design}
  Let $A =
  \begin{bmatrix}
    a_{11} & a_{12} \\
    a_{21} & a_{22}
  \end{bmatrix} 
  \in \subscr{\mathbb{F}}{p}^{2 \times 2}$, and notice that $P_A(s) =
  s^2 - (a_{11} + a_{22} ) s + a_{11}a_{22}- a_{12}a_{21}$. It follows
  from \eqref{eq:diophantine} that $A$ achieves consensus if and only
  if $a_{11} = 1 - a_{12}$, $a_{22} = 1 - a_{21}$, $a_{12} + a_{21} =
  1$, and, consequently, if and only if $A = \begin{bmatrix}
    1-\alpha &\quad \alpha\\
    1-\alpha &\quad \alpha
  \end{bmatrix}$ for some $\alpha \in \subscr{\mathbb{F}}{p}$.
  {\hfill\QED}
\end{example}

Notice that the system of equations \eqref{eq:diophantine} contains
non-linear, multivariate, polynomial equations, where the unknown
variables are the entries of the network matrix $A$, and that a
solution is required over the finite field
$\subscr{\mathbb{F}}{p}$. The problem of solving systems of
multivariate polynomial equations over finite fields is \emph{NP-hard}
\cite{ASF-YY:79,MRG-DSJ:79}, and it is one of the important research
problems in cryptography and information security
\cite{JD-JEG-DS:06}. Besides enumerating all possible solution
candidates,\footnote{If $m$ is the number of free entries in $A$, a
  brute-force solution to \eqref{eq:diophantine} requires computing
  all $p^m$ possible matrices $A$ compatible with the interaction
  graph $\mc G$.} several solution techniques have been proposed over
the last years, see for instance
\cite{LB-JCF-LP:09,NC-AK-JP-AS:00}. We next provide a condition on the
agents interaction graph for the existence of a finite-field network
achieving consensus.

\begin{theorem}{\bf \emph{(Existence of finite-field consensus
      matrices)}}\label{thm:existence}
  Let $\mc G = (\mc V, \mc E)$ be the directed agents interaction
  graph, with $| \mc V| = n$ and $|\mc E| = m$. Assume that $\mc G$
  contains a root, and that $m > \frac{n^2 + n}{2}$. Then the system
  of equations \eqref{eq:diophantine} admits $N \ge p^{\frac{2m - n^2
      -n}{2}}$ solutions over the field $\subscr{\mathbb{F}}{p}$,
  where $N$ is divisible by the characteristic $p$. In other words,
  there exist $N$ network matrices $A = [a_{ij}]$ achieving consensus,
  with $a_{ij} \in \subscr{\mathbb{F}}{p}^{n \times n}$ and $a_{ij} =
  0$ if $(i,j) \not\in \mc E$.
\end{theorem}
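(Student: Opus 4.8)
The plan is to recast the system \eqref{eq:diophantine} as a system of polynomial equations over $\subscr{\mathbb{F}}{p}$, spend its linear equations to eliminate variables, and then invoke the Chevalley--Warning theorem (in the form of Warning's second theorem; see \cite{RL-HN:96}), whose hypothesis turns out to be exactly $m>\frac{n^2+n}{2}$. The $m$ unknowns are the entries $a_{ij}$ with $(i,j)\in\mc E$. The constraint $A\vectorones[]=\vectorones[]$ is $n$ linear equations, the $i$-th involving only $\{a_{ij}:(i,j)\in\mc E\}$; these variable sets are pairwise disjoint and each is nonempty (as required for a row-stochastic $A$ compatible with $\mc G$ to exist at all), so the $n$ equations are independent and can be used to express one entry per row as an affine function of the remaining $m-n$ unknowns. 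Substituting back, we obtain a family of matrices $A(y)$, $y\in\subscr{\mathbb{F}}{p}^{m-n}$, that are automatically row-stochastic and compatible with $\mc G$.

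Next I would encode the characteristic-polynomial condition of Theorem \ref{thm:char_pol}. Row-stochasticity forces $1\in\subscr{\sigma}{p}(A)$, hence $(s-1)\mid P_A(s)$ and $P_A(1)=0$; so if $P_A(s)=s^n+c_{n-1}s^{n-1}+\dots+c_0$, then $1+c_{n-1}+\dots+c_0=0$. Consequently $P_A(s)=s^{n-1}(s-1)$ is equivalent to the $n-1$ equations $c_{n-1}=-1$ and $c_1=\dots=c_{n-2}=0$, the remaining coefficient condition $c_0=0$ then being automatic. As a polynomial in the entries of $A$, the coefficient $c_j$ is a signed sum of $(n-j)\times(n-j)$ principal minors, so $\deg c_j=n-j$ and $\deg c_{n-1}=1$; the affine substitution does not raise degrees. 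Thus, as polynomials $g_1,\dots,g_{n-1}$ in the $m-n$ variables $y$,
\begin{align*}
  \sum_{i=1}^{n-1}\deg g_i\;\le\;1+\sum_{j=1}^{n-2}(n-j)\;=\;\frac{n^2-n}{2},
\end{align*}
and the hypothesis $m>\frac{n^2+n}{2}$ is precisely $\frac{n^2-n}{2}<m-n$: the total degree is strictly below the number of variables.

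I would then apply Warning's theorem to $g_1,\dots,g_{n-1}$: the number $N$ of their common zeros in $\subscr{\mathbb{F}}{p}^{m-n}$ is divisible by $p$, and if this set is nonempty then $N\ge p^{(m-n)-\frac{n^2-n}{2}}=p^{\frac{2m-n^2-n}{2}}$. Since these zeros are in bijection with the solutions of \eqref{eq:diophantine}, both claims follow once a single consensus matrix compatible with $\mc G$ is exhibited --- and this is the only place the root hypothesis enters. Taking a root $r$ and a directed spanning tree of $\mc G$ oriented out of $r$ (which exists because $r$ reaches every vertex), put unit weight on every tree edge and unit weight on the self-loop at $r$: the resulting matrix is row-stochastic, compatible with $\mc G$, and carries every initial state to its $r$-th entry within $n-1$ iterations, hence achieves consensus and, by Theorem \ref{thm:char_pol}, solves \eqref{eq:diophantine}.

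The step I expect to be delicate is the degree bookkeeping, which must be organized so that the Chevalley--Warning hypothesis is actually met: one has to first use the $n$ row-stochasticity equations to cut the variable count to $m-n$, and then use $(s-1)\mid P_A$ to reduce the characteristic-polynomial conditions from $n$ to $n-1$; retaining all $2n$ equations, or applying the theorem on the full set of $m$ variables, overshoots the degree budget and the argument fails. A smaller but genuine subtlety is the nonemptiness argument: it truly relies on the root, and on being able to make the root's row of $A$ sum to one (handled above by a self-loop at $r$) --- without such a construction the solution set could be empty and the claimed lower bound on $N$ would be vacuous.
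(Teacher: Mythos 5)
Your proof is correct and follows essentially the same route as the paper: encode row-stochasticity and the coefficient conditions of Theorem \ref{thm:char_pol} as polynomial equations over $\subscr{\mathbb{F}}{p}$, verify that the total degree stays strictly below the number of free edge weights so that the Chevalley--Warning and Warning theorems of \cite{RL-HN:96} give divisibility by $p$ and the lower bound $p^{\frac{2m-n^2-n}{2}}$, and establish nonemptiness via a rooted spanning tree with unit weights and a self-loop at the root (the same implicit reliance on the root's self-loop appears in the paper's construction). The only difference is bookkeeping: you eliminate the $n$ linear equations and the redundant constant-coefficient condition before applying Warning in $m-n$ variables, whereas the paper applies the same theorems directly in the $m$ edge variables -- your closing claim that this latter organization necessarily overshoots the degree budget is not quite right, since dropping the redundant determinant condition already brings the total degree down to $\frac{n^2+n}{2} < m$ and yields the identical threshold and bound.
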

\begin{proof}
  Let $c(A,d)$ be the coefficient of the monomial of degree $d$ in the
  parametric characteristic polynomial $P_A$, with $A = [a_{ij}]$ and
  $a_{ij} = 0$ if $(i,j) \not\in \mc E$. Define the polynomial $f_i
  \in \subscr{\mathbb{F}}{p} [a_{ij}]$ with $(i,j) \in \mc E$ as: $f_i = f_i
  (a_{ij})= \sum_{j = 1}^n a_{i j} -1$ for $i \in \until{n}$, $f_{n+1}
  = 1 + c(A,n-1)$, and $f_i = c(A,2n - i)$ for $i \in
  \fromto{n+2}{2n}$. Notice that a solution to \eqref{eq:diophantine},
  and hence a finite-field consensus matrix compatible with $\mc G$,
  can be computed by simultaneously solving the equations $f_i = 0$
  for $i \in \until{2n}$. Observe that $\text{deg}(f_i) = 1$ for $i
  \in \until{n}$, and $\text{deg} (f_{i}) \le i - n$ for $i \in
  \fromto{n+1}{2n}$. Then $\sum_{i = 1}^{2n} \text{deg} (f_i) \le n +
  \frac{n(n-1)}{2} \le \frac{n^2 + n}{2}$. Since $m > \frac{n^2 +
    n}{2}$ by assumption, we conclude from \cite[Theorem
  6.8]{RL-HN:96} that the number of simultaneous solutions $N$ to the
  equations $f_i = 0$, $i \in \until{2n}$, in the field
  $\subscr{\mathbb{F}}{p}$ is divisible by the characteristic $p$. To
  show that $N \ge p^{\frac{2 m - n^2 -n}{2}}$, we next construct a
  network matrix $A$ achieving consensus and compatible with the
  interaction graph $\mc G$, and then employ \cite[Theorem
  6.11]{RL-HN:96}. Let $v$ be a root of $\mc G$, and let $S = (\mc V,
  \mc E_S)$ be a rooted spanning tree of $\mc G$ with root $v$, and
  with $(v,v) \in \mc E_S$ and $(i,i) \not\in \mc E_S$ for $i \neq
  v$. Relabel the vertices $\mc V$ according to their distance from
  the root $v$. Define the matrix $A = [a_{ij}]$, where $a_{ij} =1$ if
  $(i,j) \in \mc E_S$, and $a_{ij} = 0$ otherwise. Notice that $A \in
  \subscr{\mathbb{F}}{p}^{n \times n}$ is triangular and
  row-stochastic, and that its diagonal elements are
  $\{1,0,\dots,0\}$. It follows from Theorem \ref{thm:char_pol} that
  $A$ achieves consensus.
\end{proof}

In view of the existence Theorem \eqref{thm:existence}, consensus over
finite fields is possible on a fairly broad class of interaction
graphs. A complete characterization of all the interaction topologies
yielding consensus over finite fields is beyond the scope of this
work, and it is left as the subject of future research. We conclude
this section with a remark.

\begin{remark}{\bf \emph{(Network design for fully connected
      graphs)}}\label{remark:fully_connected}
  Let the agents interaction graph $\mc G = (\mc V , \mc E)$ be fully
  connected, that is $(i,j) \in \mc E$ for all $i,j \in
  \until{n}$. Let $v \in \subscr{\mathbb{F}}{p}^{1 \times n}$ be any
  vector satisfying $v \vectorones[] = 1$. Then the network matrix $A
  = [v^\transpose \, \cdots \, v^\transpose]^\transpose$ achieves
  consensus over $\subscr{\mathbb{F}}{p}$; see for instance the matrix
  $A_1$ in Example \ref{example:consensus}. To see this, let
  $\subscr{\vectorones[]}{orth} \in \subscr{\mathbb{F}}{p}^{n \times
    (n-1)}$ be any full column rank matrix satisfying $v
  \subscr{\vectorones[]}{orth} = 0$. Since $A \vectorones[] =
  \vectorones[]$, and $A \subscr{\vectorones[]}{orth} = 0$, we
  conclude that $\subscr{\sigma}{p} (A) = \{1,0,\dots,0\}$, with
  $|\subscr{\sigma}{p} (A)| = n$, and the claimed statement follows
  from Theorem \ref{thm:char_pol}. {\hfill\QED}
\end{remark}

\subsection{Network design via network composition}
In this section we use the concept of graph products to generate
finite-field consensus networks from smaller consensus components. For
two matrices $A \in \subscr{\mathbb{F}}{p}^{n \times n}$ and $B \in
\subscr{\mathbb{F}}{p}^{m \times m}$, let $A \otimes B \in
\subscr{\mathbb{F}}{p}^{nm \times nm}$ denote their \emph{Kronecker
  product}, where \cite{CDM:01}
\begin{align*}
   A \otimes B =
   \begin{bmatrix}
     a_{11} B & a_{12} B & \cdots & a_{1n} B\\
     a_{21} B & a_{22} B & \cdots & a_{2n} B\\
     \vdots & \vdots & \ddots & \vdots\\
     a_{n1} B & a_{n2} B & \cdots & a_{nn} B
   \end{bmatrix}.
\end{align*}

\begin{theorem}{\bf \emph{(Finite-field consensus via Kronecker
      product)}}\label{thm:kronecker}
  Consider the network matrices $A \in \subscr{\mathbb{F}}{p}^{n
    \times n}$ and $B \in \subscr{\mathbb{F}}{p}^{m \times m}$, and
  assume that $A$ and $B$ achieve consensus. Then the network
  matrix $A \otimes B \in \subscr{\mathbb{F}}{p}^{nm \times nm}$
  achieves consensus.
\end{theorem}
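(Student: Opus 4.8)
The plan is to use the characteristic-polynomial characterization of finite-field consensus (Theorem~\ref{thm:char_pol}), reduced via the equivalent restatement in terms of eigenvalues, namely that $A$ achieves consensus if and only if $A\vectorones[]=\vectorones[]$ and $\subscr{\sigma}{p}(A)=\{1,0,\dots,0\}$ (equivalently $P_A(s)=s^{n-1}(s-1)$). So I would verify two things for the matrix $A\otimes B$: first, that it is row-stochastic over $\subscr{\mathbb{F}}{p}$; second, that its spectrum (with the correct multiplicities, i.e.\ its characteristic polynomial) equals $s^{nm-1}(s-1)$.

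The row-stochasticity is immediate: since $A\vectorones[n]=\vectorones[n]$ and $B\vectorones[m]=\vectorones[m]$, and $\vectorones[nm]=\vectorones[n]\otimes\vectorones[m]$, the mixed-product property $(A\otimes B)(\vectorones[n]\otimes\vectorones[m])=(A\vectorones[n])\otimes(B\vectorones[m])=\vectorones[n]\otimes\vectorones[m]=\vectorones[nm]$ gives the claim; all of this is valid over any commutative ring, in particular over $\subscr{\mathbb{F}}{p}$, so modular arithmetic poses no obstruction. For the spectral part, I would invoke Theorem~\ref{thm:char_pol} applied to $A$ and $B$ separately: since $A$ and $B$ achieve consensus, $P_A(s)=s^{n-1}(s-1)$ and $P_B(s)=s^{m-1}(s-1)$. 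The key step is then the classical fact that the characteristic polynomial of a Kronecker product factors through the pairwise products of eigenvalues: if $A$ has eigenvalues $\lambda_1,\dots,\lambda_n$ and $B$ has eigenvalues $\mu_1,\dots,\mu_m$ (counted with algebraic multiplicity, in a splitting field), then $A\otimes B$ has eigenvalues $\{\lambda_i\mu_j\}_{i,j}$. Here $A$ has eigenvalue $1$ once and $0$ with multiplicity $n-1$, and likewise for $B$, so the products $\lambda_i\mu_j$ are: $1\cdot 1=1$ exactly once, and $0$ in the remaining $nm-1$ cases. Hence $P_{A\otimes B}(s)=s^{nm-1}(s-1)$, and together with row-stochasticity Theorem~\ref{thm:char_pol} yields that $A\otimes B$ achieves consensus.

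The main obstacle is making the eigenvalue-product statement rigorous over a field that is \emph{not} algebraically closed: Theorem~\ref{thm:char_pol} only tells us the factorization of $P_A$ and $P_B$ \emph{over $\subscr{\mathbb{F}}{p}$}, and one must be careful that the Kronecker-product eigenvalue formula is usually proved over an algebraically closed field. Fortunately this is not a real difficulty here, because both $P_A$ and $P_B$ already split completely over $\subscr{\mathbb{F}}{p}$ (their roots are $0$ and $1$), so $A$ and $B$ are triangularizable over $\subscr{\mathbb{F}}{p}$ itself (e.g.\ via their Jordan forms, which exist over $\subscr{\mathbb{F}}{p}$ as noted after Theorem~\ref{thm:char_pol}). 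Writing $A=U^{-1}T_AU$ and $B=W^{-1}T_BW$ with $T_A,T_B$ upper triangular over $\subscr{\mathbb{F}}{p}$, the mixed-product property gives $A\otimes B=(U\otimes W)^{-1}(T_A\otimes T_B)(U\otimes W)$, and $T_A\otimes T_B$ is upper triangular with diagonal entries the products of the diagonal entries of $T_A$ and $T_B$; reading these off gives exactly one $1$ and $nm-1$ zeros, hence $P_{A\otimes B}(s)=s^{nm-1}(s-1)$. This keeps the entire argument inside $\subscr{\mathbb{F}}{p}$ and avoids any appeal to algebraic closure.
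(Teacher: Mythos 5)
Your proposal is correct and follows essentially the same route as the paper: both establish row-stochasticity of $A\otimes B$ (your mixed-product identity is just a compact form of the paper's block computation) and then triangularize $A$ and $B$ over $\subscr{\mathbb{F}}{p}$ via their Jordan forms, use the similarity $(U\otimes W)^{-1}(T_A\otimes T_B)(U\otimes W)$ to read off the spectrum $\{\lambda_i\mu_j\}$, and conclude with Theorem~\ref{thm:char_pol}. Your explicit remark that no appeal to algebraic closure is needed because $P_A$ and $P_B$ already split over $\subscr{\mathbb{F}}{p}$ is exactly the point the paper handles implicitly by invoking the finite-field Jordan form.
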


\begin{example}{\bf \emph{(Finite-field consensus network via
      Kronecker product)}}\label{example:kronecker}
  \begin{figure}
    \centering
    \includegraphics[width=.45\columnwidth]{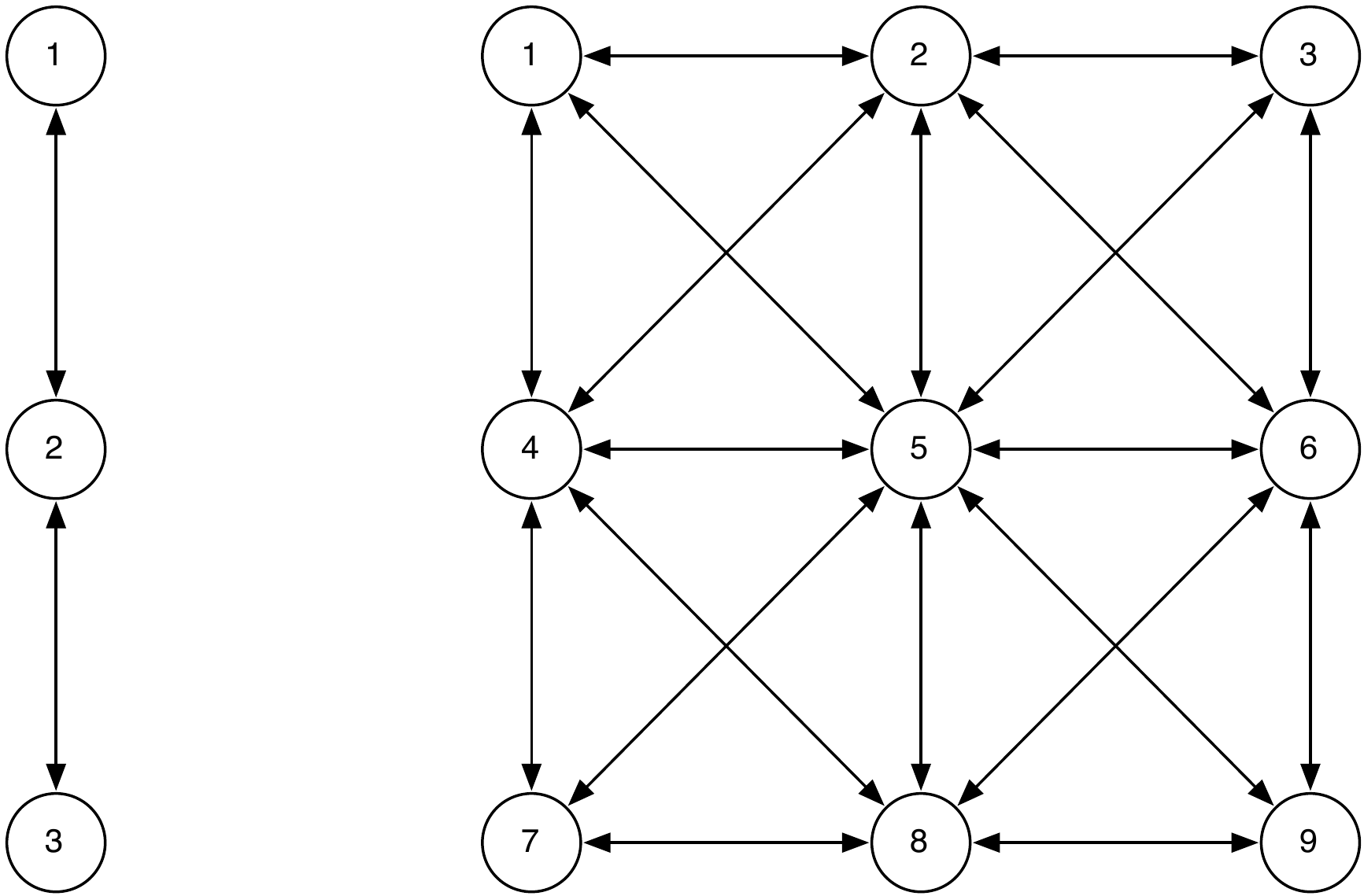}
    \caption{Agents interaction graphs for the matrices $A$ (left) and
      $\subscr{A}{k} = A \otimes A$ (right) in Example
      \ref{example:kronecker}. Self-loops have been omitted. Notice
      that the interaction graphs of $A$ and $\subscr{A}{k}$ are
      self-similar \cite{JL-DC-LK-CF-ZG:10}.}
    \label{fig:kronecker}
\end{figure}
Consider the network matrix
  \begin{align*}
    A = 
    \begin{bmatrix}
      9 & 3 & 0\\
      1 & 9 & 2\\
      0 & 7 & 5
    \end{bmatrix}
  \end{align*}
  over the field $\subscr{\mathbb{F}}{11}$. It can be verified that
  $A$ achieves consensus. By Theorem \ref{thm:kronecker} the network
  matrix
  \begin{align*}
    \subscr{A}{k} = A \otimes A =
    \begin{bmatrix}
     4&     5&     0&     5&     9&     0&     0&     0&     0\\
     9&     4&     7&     3&     5&     6&     0&     0&     0\\
     0&     8&     1&     0&    10&     4&     0&     0&     0\\
     9&     3&     0&     4&     5&     0&     7&     6&     0\\
     1&     9&     2&     9&     4&     7&     2&     7&     4\\
     0&     7&     5&     0&     8&     1&     0&     3&    10\\
     0&     0&     0&     8&    10&     0&     1&     4&     0\\
     0&     0&     0&     7&     8&     3&     5&     1&    10\\
     0&     0&     0&     0&     5&     2&     0&     2&     3
    \end{bmatrix}
  \end{align*}
  achieves consensus over $\subscr{\mathbb{F}}{11}$. In fact, it can
  be verified that $P_{\subscr{A}{k}} = s^{8} (s-1)$ and
  $\subscr{A}{k} \vectorones[] = \vectorones[]$. Moreover, the network
  matrices $A$ and $\subscr{A}{k}$ have the same convergence
  speed. The interaction graph of $A$ and $\subscr{A}{k}$ are reported
  in Fig. \ref{fig:kronecker}.  {\hfill\QED}
\end{example}

\begin{pfof}{Theorem \ref{thm:kronecker}}
  We first show that $(A \otimes B)$ is row-stochastic. Let
  $\vectorones[n]$ be the vector of all ones of dimension $n$. Since
  $A$ and $B$ are row-stochastic over $\subscr{\mathbb{F}}{p}$, we
  have
  \begin{align*}
    &(A \otimes B) \vectorones[nm] =
    \begin{bmatrix}
      a_{11} B & a_{12} B & \cdots & a_{1n} B\\
     a_{21} B & a_{22} B & \cdots & a_{2n} B\\
     \vdots & \vdots & \ddots & \vdots\\
     a_{n1} B & a_{n2} B & \cdots & a_{nn} B
    \end{bmatrix}
    \vectorones[nm]
    =
    \begin{bmatrix}
      \sum_{j = 1}^n a_{1,j} B \vectorones[m]\\
      \sum_{j = 1}^n a_{2,j} B \vectorones[m]\\
      \vdots\\
      \sum_{j = 1}^n a_{n,j} B \vectorones[m]
    \end{bmatrix}
    =
    \begin{bmatrix}
      \sum_{j = 1}^n a_{1,j}  \vectorones[m]\\
      \sum_{j = 1}^n a_{2,j}  \vectorones[m]\\
      \vdots\\
      \sum_{j = 1}^n a_{n,j} \vectorones[m]
    \end{bmatrix}
    =
    \vectorones[nm].
  \end{align*}

  % Recall from \cite{CDM:01} that, for
  % matrices with real entries, $\subscr{\sigma}{p} (A \otimes B) =
  % \lambda_i \mu_j$ where
  % $\lambda_i \in \subscr{\sigma}{p} (A)$, $\mu_i \in
  % \subscr{\sigma}{p} (B)$, for all $i,j \in \until{n}$.
  Following the same argument as in \cite{CDM:01}, let $J_A = P^{-1} A
  P$ and $J_B = Q^{-1} B Q$ be the canonical Jordan form of $A$ and
  $B$, respectively \cite[Theorem 3.5]{PS:10}. Then $J_A \otimes J_B
  = (P^{-1} A P) \otimes (Q^{-1} B Q) = (P^{-1} \otimes Q^{-1}) (A
  \otimes B) (P \otimes Q) = (P \otimes Q)^{-1} (A\otimes B) (P
  \otimes Q)$, so that $A\otimes B$ and $J_A \otimes J_B$ are
  similar. Thus the eigenvalues of $A \otimes B$ are the same as those
  of $J_A \otimes J_B$, and, because $J_A$ and $J_B$ are upper
  triangular with eigenvalues $\lambda_i$ and $\mu_i$ on the diagonal,
  we conclude that $J_A \otimes J_B$ is also upper triangular with
  diagonal entries, and eigenvalues, $\lambda_i \mu_j$. Finally, since
  $\subscr{\sigma}{p} (A) = \{1,0,\dots,0\}$ and $\subscr{\sigma}{p}
  (B) = \{1,0,\dots,0\}$, the statement follows from Theorem
  \ref{thm:char_pol}.
\end{pfof}

Following Theorem \ref{thm:kronecker}, finite-field consensus networks
can be constructed by composing smaller components. We refer the
interested reader to \cite{PMW:62,JL-DC-LK-CF-ZG:10} for a
comprehensive discussion of graphs generated via Kronecker product of
adjacency matrices. Regarding the convergence speed of finite-field
consensus networks generated via Kronecker products, the following
holds. Let $\subscr{A}{k} = A_1 \otimes \dots \otimes A_m$, and let
$s_i$ be the number of iterations needed for convergence of the
network $A_i$, $i \in \until{m}$. Due to \cite[Theorem
4.3.17]{RAH-CRJ:94}, the consensus network $\subscr{A}{k}$ converges
exactly in $\max \{s_1,\dots,s_m \}$ iterations. In other words, the
consensus network $\subscr{A}{k}$ is as fast as the slowest of its
components.

\section{Application to Average Consensus and Distributed Pose
  Estimation}\label{sec:estimation}
This section contains two application scenarios for finite-field
consensus networks. In Section \ref{sec:average} we develop a
finite-time averaging algorithm based on finite-field consensus
networks. Instead, in Section \ref{sec:estimation} we use finite-field
networks to distributively estimate the orientation of a sensor
network given relative measurements.

\subsection{Finite-time average consensus}\label{sec:average}
Given a sensor network, let $x_0 \in \subscr{\mathbb{F}}{p}^n$ be the
vector containing the agents initial states. Let $x_\mathbb{R} =
\vectorones[]^\transpose x_0 / n \in \mathbb{R}$ be the average of the
agents initial states over the field of real numbers. The average of
the agents initial states over the field $\subscr{\mathbb{F}}{p}$ can
be computed by means of Fermat's little theorem \cite{SML:40} as
$x_\mathbb{F} = n^{p-2} \vectorones[]^\transpose x_0 \in \subscr{F}{p}$,
where we assume $n \neq k p$, with $k \in \mathbb{N}$, for the inverse
of $n$ over $\subscr{\mathbb{F}}{p}$ to exist. In what follows, first
we show how to compute the average $x_\mathbb{F}$ by means of
finite-field consensus networks. Then we describe conditions that
allow to recover the average $x_\mathbb{R}$ from the knowledge of
$x_\mathbb{F}$ and the total number of agents. We remark that
distributed algorithms for average consensus are useful in several
applications, including distributed parameter estimation
\cite{LX-SB-SL:05}, distributed hypothesis testing
\cite{BSYR-HFDH:93}, robotic coordination \cite{FB-JC-SM:09}, and for
the understanding of opinion dynamics \cite{MHDG:74}.

We say that the iteration \eqref{eq:autonomous} over the field
$\subscr{\mathbb{F}}{p}$ achieves average consensus if it achieves
consensus, and the consensus value is $n^{p-2}
\vectorones[]^\transpose x_0$ for every initial state $x_0$. A condition
for finite-field average consensus is given in the next theorem.

\begin{theorem}{\bf \emph{(Finite-field average
      consensus)}}\label{thm:average}
  Consider the iteration \eqref{eq:autonomous} over the field
  $\subscr{\mathbb{F}}{p}$ with row-stochastic matrix $A$. Assume that
  the field characteristic satisfies $n \neq k p$ for all $k \in
  \mathbb{N}$. The following statements are equivalent:
  \begin{enumerate}
  \item the iteration \eqref{eq:autonomous} achieves average
    consensus, and
  \item $P_A (s) = s^{n - 1}(s - 1)$, and $\vectorones[]^\transpose A
    = \vectorones[]^\transpose$.
  \end{enumerate}
\end{theorem}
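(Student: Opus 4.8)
The plan is to build on Theorem~\ref{thm:char_pol}, which already characterizes plain consensus, and argue that the extra condition $\vectorones[]^\transpose A = \vectorones[]^\transpose$ is exactly what pins the consensus value to the average. First I would prove the implication (ii)$\implies$(i): assuming $P_A(s) = s^{n-1}(s-1)$ and $A$ row-stochastic, Theorem~\ref{thm:char_pol} gives that the iteration achieves consensus, and Theorem~\ref{thm:convergence_value} gives $A^T = \vectorones[]\pi$ where $\pi$ is the unique left eigenvector with $\pi A = \pi$, $\pi\vectorones[] = 1$. The consensus value is then $\pi x_0$. Now I would observe that under the additional hypothesis $\vectorones[]^\transpose A = \vectorones[]^\transpose$, the vector $\tfrac{1}{n}\vectorones[]^\transpose$ (i.e.\ $n^{p-2}\vectorones[]^\transpose$ over $\subscr{\mathbb{F}}{p}$, which is well defined since $n\neq kp$) is a left eigenvector of $A$ for eigenvalue $1$, and it satisfies the normalization $n^{p-2}\vectorones[]^\transpose\vectorones[] = n^{p-2}\cdot n = n^{p-1} = 1$ by Fermat's little theorem. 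By the uniqueness of $\pi$ asserted in Theorem~\ref{thm:convergence_value}, we get $\pi = n^{p-2}\vectorones[]^\transpose$, hence the consensus value is $n^{p-2}\vectorones[]^\transpose x_0$ for every $x_0$, which is average consensus.

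For the converse (i)$\implies$(ii): average consensus is in particular consensus, so Theorem~\ref{thm:char_pol} immediately yields $P_A(s) = s^{n-1}(s-1)$. It remains to deduce $\vectorones[]^\transpose A = \vectorones[]^\transpose$. Since consensus is achieved, Theorem~\ref{thm:convergence_value} gives $A^T = \vectorones[]\pi$ with $\pi A = \pi$; average consensus means $\pi x_0 = n^{p-2}\vectorones[]^\transpose x_0$ for all $x_0\in\subscr{\mathbb{F}}{p}^n$, so $\pi = n^{p-2}\vectorones[]^\transpose$. Then from $\pi A = \pi$ we get $n^{p-2}\vectorones[]^\transpose A = n^{p-2}\vectorones[]^\transpose$, and multiplying both sides by $n$ (invertible in $\subscr{\mathbb{F}}{p}$ because $n\neq kp$) gives $\vectorones[]^\transpose A = \vectorones[]^\transpose$, as desired.

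The one point that needs a little care — and which I expect to be the main (minor) obstacle — is the role of the uniqueness of $\pi$ in Theorem~\ref{thm:convergence_value}: I should make sure the argument only uses that the left $1$-eigenspace is one-dimensional with a fixed normalization, which is guaranteed because the eigenvalue $1$ of $A$ is simple (the characteristic polynomial is $s^{n-1}(s-1)$, so $1$ has algebraic, hence geometric, multiplicity one). A second bookkeeping point is to state clearly at the outset that $n^{p-2}$ is the multiplicative inverse of $n$ in $\subscr{\mathbb{F}}{p}$ under the hypothesis $n\neq kp$, via Fermat's little theorem $n^{p-1}=1$, since this identity is used in both directions. Everything else is a direct chain of substitutions.
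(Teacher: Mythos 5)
Your proposal is correct, and the direction (i)$\implies$(ii) is essentially the paper's own argument: invoke Theorem \ref{thm:char_pol} for the characteristic polynomial, then use Theorem \ref{thm:convergence_value} to write $A^T=\vectorones[]\pi$ with $\pi A=\pi$, identify $\pi=n^{p-2}\vectorones[]^\transpose$ from the average-consensus value, and conclude $\vectorones[]^\transpose A=\vectorones[]^\transpose$. The only divergence is in (ii)$\implies$(i): the paper does not reuse Theorem \ref{thm:convergence_value} there; it simply notes that $\vectorones[]^\transpose A=\vectorones[]^\transpose$ makes the sum $\vectorones[]^\transpose x(t)$ invariant, so at consensus $n\alpha=\vectorones[]^\transpose x(0)$ and hence $\alpha=n^{p-2}\vectorones[]^\transpose x(0)$, whereas you argue via uniqueness of the normalized left eigenvector $\pi$ (checking $n^{p-2}\vectorones[]^\transpose\vectorones[]=n^{p-1}=1$ by Fermat's little theorem) to conclude $\pi=n^{p-2}\vectorones[]^\transpose$. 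Both routes are valid; the paper's conservation-law argument is marginally more elementary, while yours buys a uniform treatment of both directions through the same eigenvector identity, at the cost of relying on the simplicity of the eigenvalue $1$ to justify uniqueness of $\pi$ --- a point you correctly flag and which indeed follows from $P_A(s)=s^{n-1}(s-1)$.
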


\begin{example}{\bf \emph{(An example of finite-field average
      consensus)}}\label{example:average}
  \begin{figure}
    \centering
    \includegraphics[width=.9\columnwidth]{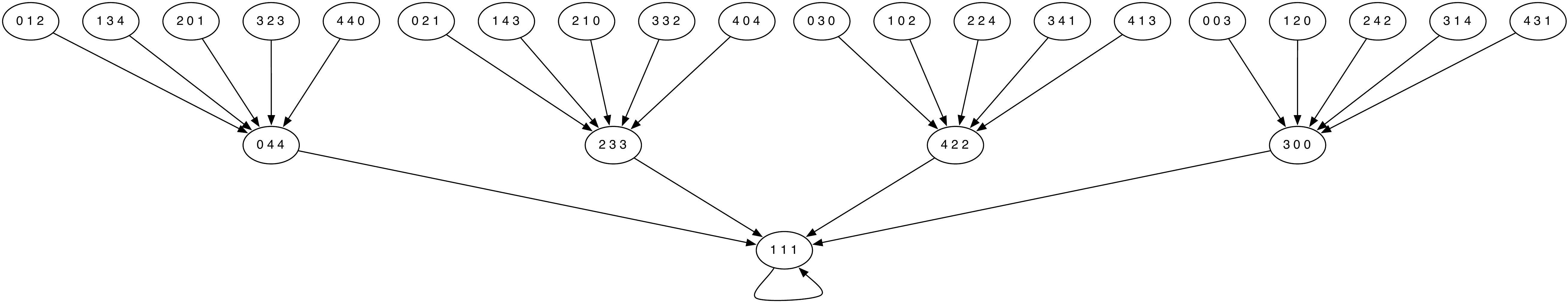}
    \caption{A subgraph of the transition graph associated with the
      network matrix $A$ in Example \ref{example:average}. Notice that
      the sum of the initial states is maintained, and thus average
      consensus is achieved.}
    \label{fig:average}
  \end{figure}
  Consider the network matrix
  \begin{align*}
    A = 
    \begin{bmatrix}
      2 & 3 & 1\\
      2 & 4 & 0\\
      2 & 4 & 0
    \end{bmatrix}
  \end{align*}
  over the field $\subscr{\mathbb{F}}{5}$. It can be verified that $A
  \vectorones[] = \vectorones[]$, $\vectorones[]^\transpose A =
  \vectorones[]^\transpose$, and $P_A = s^2(s-1)$. By Theorem
  \ref{thm:average} the network matrix $A$ achieves average consensus
  over $\subscr{\mathbb{F}}{5}$. In Fig. \ref{fig:average} we show a
  subgraph of the transition graph associated with $A$.  {\hfill\QED}
\end{example}

\begin{pfof}{Theorem \ref{thm:average}}

  \noindent \textit{(i) $\implies$ (ii)} Since the iteration achieves
  consensus, it follows from Theorem \ref{thm:convergence_value} that
  $A^n = \vectorones[] \pi$, where $\pi$ satisfies $\pi A =
  \pi$. Because $A$ achieves average consensus, it needs to be
  $\vectorones[] \pi = n^{p-2} \vectorones[]
  \vectorones[]^\transpose$. Then $\pi = n^{p-2}
  \vectorones[]^\transpose$, and $\vectorones[]^\transpose A =
  \vectorones[]^\transpose$.

  \noindent \textit{(ii) $\implies$ (i)} Since $P_A (s) = s^{n - 1}(s
  - 1)$ and $A \vectorones[] = \vectorones[]$, it follows from Theorem
  \ref{thm:char_pol} that the network achieves consensus. Notice that
  $\vectorones[]^\transpose A = \vectorones[]^\transpose$ implies that
  $\vectorones[]^\transpose x(t) = \vectorones[]^\transpose x(0)$ at
  all times time $t$. Let $\alpha$ be the consensus value, and notice
  that $n \alpha = \vectorones[]^\transpose x(0)$. To conclude the
  proof, $\alpha = n^{p-2} \vectorones[]^\transpose x(0)$, and the
  network achieves average consensus.
\end{pfof}

Theorem \ref{thm:average} provides a necessary and sufficient
condition for a network with $n \neq kp$ agents to achieve average
consensus over $\subscr{\mathbb{F}}{p}$. The condition $n \neq kp$ is
actually necessary for average consensus. In other words, if $n = kp$
for some $k \in \mathbb{N}$, then there exists no network matrix
satisfying all conditions in Theorem \ref{thm:average} and, therefore,
average consensus cannot be achieved. To see this, let $x(0)$ be the
network initial state, with $\vectorones[]^\transpose x(0) \neq 0$,
and assume by contradiction that $\alpha$ is the corresponding
consensus value. Since $\vectorones[]^\transpose x(t) =
\vectorones[]^\transpose x(0)$ at all times $t$, it needs to be $n
\alpha = \vectorones[]^\transpose x(0)$. Then $0 = n \alpha =
\vectorones[]^\transpose x(0) \neq 0$, since $kp$ and $0$ are in fact
the same element in $\subscr{\mathbb{F}}{p}$.
% We conclude that, if $n = kp$, then average consensus cannot be
% achieved, independently of the network matrix and the field
% characteristic.

Suppose now that the average $x_\mathbb{F}$ has been computed, and
that each agent knows the total number of agents, the field
characteristic, and its own initial state. With these assumptions, it
is generally not possible to recover the average $x_\mathbb{R}$. To
see this, consider the case $n = 3$, $p = 5$, and the initial
conditions $x_1 = [2 \; 2 \; 2]^\transpose$ and $x_2 = [0 \; 0 \;
1]^\transpose$. Over the field of real numbers we have
$x_{1,\mathbb{R}} = \vectorones[]^\transpose x_1 / n = 2$ and
$x_{2,\mathbb{R}} = \vectorones[]^\transpose x_2 / n = 1/3$. Over the
field $\subscr{\mathbb{F}}{p}$, instead, $x_{1,\mathbb{F}} = n^{p-2}
\vectorones[]^\transpose x_1 = 2$ and $x_{2,\mathbb{F}} = n^{p-2}
\vectorones[]^\transpose x_2 = 2$. Since $x_{1,\mathbb{F}} =
x_{2,\mathbb{F}}$ and $x_{1,\mathbb{R}} \neq x_{2,\mathbb{R}}$, it is
not possible to recover the average value over the field of real
numbers from the average over a finite field and knowledge of network
cardinality and parameters. In the next theorem we present a
sufficient condition to recover the desired real-valued average.

\begin{theorem}{\bf \emph{(Average
      computation)}}\label{thm:average_computation}
  Let $x_0 \in \subscr{\mathbb{F} }{p}^n$, let $x_\mathbb{R} =
  \vectorones[]^\transpose x_0 / n \in \mathbb{R}$, and let $x_\mathbb{F}
  = n^{p-2} \vectorones[]^\transpose x_0$. If the field characteristic
  satisfies $n \| x_0 \|_\infty \le p$, then $x_\mathbb{R} =
  \text{mod} (n \, x_\mathbb{F}, p) / n$.
  % \begin{align*}
  %   n \, x_\mathbb{R} = \text{mod} (n \, x_\mathbb{F}, p) ,
  % \end{align*}
  % where $\text{mod}(\cdot )$ denotes the modulus operation.
\end{theorem}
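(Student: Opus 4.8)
The plan is to show that the real-valued sum $\vectorones[]^\transpose x_0$ (computed over $\mathbb{Z}$, without reduction) is recovered exactly from its residue modulo $p$, and then divide by $n$. The key observation is that $x_\mathbb{F} = \mathrm{mod}(n^{p-2}\vectorones[]^\transpose x_0, p)$ and, by Fermat's little theorem, $n^{p-2} \equiv n^{-1} \pmod p$ (this uses the standing assumption $n \neq kp$, so that $n$ is invertible in $\subscr{\mathbb{F}}{p}$). Hence $\mathrm{mod}(n\,x_\mathbb{F}, p) = \mathrm{mod}(n \cdot n^{-1} \vectorones[]^\transpose x_0, p) = \mathrm{mod}(\vectorones[]^\transpose x_0, p)$, where on the right-hand side $\vectorones[]^\transpose x_0$ is viewed as an integer (each $x_i \in \{0,\dots,p-1\}$).

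The crux is the bound. The plan is to argue that $\vectorones[]^\transpose x_0$, computed as an integer, already lies in the range $\{0,\dots,p-1\}$, so reduction modulo $p$ leaves it unchanged. Indeed, since each component satisfies $0 \le x_i \le \|x_0\|_\infty$ and there are $n$ of them, $0 \le \vectorones[]^\transpose x_0 \le n\|x_0\|_\infty \le p$. Strictly speaking the inequality $\le p$ admits the boundary value $\vectorones[]^\transpose x_0 = p$; in that (degenerate) case $\mathrm{mod}(\vectorones[]^\transpose x_0, p) = 0 \ne p$, so one should either assume the strict inequality $n\|x_0\|_\infty < p$ or note that the claimed identity still holds after interpreting both sides over $\mathbb{R}$ once the sum is known to be a genuine integer multiple of $n$. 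Assuming the strict bound, we get $\mathrm{mod}(\vectorones[]^\transpose x_0, p) = \vectorones[]^\transpose x_0$ as an integer.

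Combining the two steps, $\mathrm{mod}(n\,x_\mathbb{F}, p) = \vectorones[]^\transpose x_0$, where the right side is the true integer sum. Dividing by $n$ yields $\mathrm{mod}(n\,x_\mathbb{F}, p)/n = \vectorones[]^\transpose x_0/n = x_\mathbb{R}$, which is the desired identity.

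I expect the only subtlety to be the boundary case $n\|x_0\|_\infty = p$ flagged above; otherwise the argument is a direct chain of modular-arithmetic identities built on Fermat's little theorem and the range bound on the components of $x_0$.
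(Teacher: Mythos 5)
Your argument is correct and essentially the same as the paper's, which likewise applies Fermat's little theorem (via $\text{mod}(n^{p-1},p)=1$) together with the bound $n\|x_0\|_\infty \le p$ to conclude that $\text{mod}(n\,x_\mathbb{F},p)$ equals the integer sum $\vectorones[]^\transpose x_0$, and then divides by $n$. The boundary case you flag cannot actually occur: the integer sum equals $p$ only if every $x_i$ equals $\|x_0\|_\infty$ and $n\|x_0\|_\infty = p$, which, since $p$ is prime and each $x_i \le p-1$, forces $n = p$, contradicting the section's standing assumption $n \neq kp$ that makes $n$ invertible over $\subscr{\mathbb{F}}{p}$.
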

\begin{proof}
  The statement follows from the relation
  \begin{align*}
    \text{mod} (n \, x_\mathbb{F}, p) = \text{mod} ( n^{p-1}
    \vectorones[]^\transpose x_0 , p) = \vectorones[]^\transpose x_0,
  \end{align*}
  where the last equality holds because $\text{mod} (n^{p-1} , p) =
  1$, and $n \| x_0 \|_\infty \le p$.
\end{proof}

Our finite-time average consensus algorithm follows from Theorems
\ref{thm:average} and \ref{thm:average_computation}, and it is
reported in Algorithm \ref{algo:average}.

\begin{algorithm}[ht]
\caption{\textit{Distributed average computation (agent $i$)}}
\label{algo:average}
\begin{algorithmic}
  
  {\footnotesize
    
    \Statex \textbf{Input:} Field characteristic $p$, Initial state
    $x_i(0) \in \subscr{\mathbb{F}}{p}$, Neighbors set $\mc
    N_i^\text{in}$ and $\mc N_i^\text{out}$, Weights $a_{ij} \in
    \subscr{\mathbb{F}}{p}$ for all $j \in \mathcal{N}_i^\text{in}$, Number of
    iterations $T$ (set $T = n$ otherwise), Number of agents $n$;

    \Statex \textbf{Require:} $p$ is a prime number, $A = [a_{ij}]$
    achieves average consensus over $\subscr{\mathbb{F}}{p}$, $n \max
    \{x_1 (0), \dots, x_n(0) \} \le p$;

    \Statex \textbf{Output:} Average of initial states
    $\subscr{x}{ave} = \sum_{j = 1}^n x_j (0) /n \in \real$;

    \medskip

    \For{$t = 0,\dots,T$}

    \State{Transmit $x_i (t)$ to $\mc N_i^\text{out}$;}

    \State{Receive $x_j (t)$ from $\mc N_i^\text{in}$;}

    \State{Set $x_i (t+1) = \sum_{j \in \mathcal N^\text{in}_i} a_{ij}
      x_j(t)$;}

    \EndFor 

    \Statex {\textbf{return} $\subscr{x}{ave} = \text{mod} (n \, x_i(T) , p)/n$;}
 }
 \end{algorithmic}
\end{algorithm}

We conclude this part by noticing that the condition $n \| x_0
\|_\infty \le p$ in Theorem \ref{thm:average_computation} is not
restrictive. In fact, the field characteristic $p$ is a design
parameter, and, in general, it can be chosen to satisfy the above
condition as long as the network cardinality $n$ is known and a bound
on the agents initial state is known.

\subsection{Pose estimation from relative
  measurements}\label{sec:pose_estimation}
In this section we use our previous analysis to calibrate the
orientation of a network of cameras. This problem has been previously
considered in
\cite{GP-IS-BF-FB-BDOA:09r,WJR-DJK-JPH:11,DB-EL-RC-FF-SZ:12} as a
distributed estimation problem over $SO(2)$. With respect to the
existing literature, we let the measurements and the orientations take
value in a pre-specified finite field, and we develop an estimation
algorithm with performance guarantees based on modular arithmetic.

A camera network is modeled by an undirected graph $\mc G = (\mc V,\mc
E)$, where each vertex is associated with a camera. Let $n = |\mc V|$
and $m = |\mc E|$.
% Let $\map{s}{\mc E}{\mc V}$ (resp. $\map{t}{\mc E}{\mc V}$) define
% the start (resp. end) vertex of an edge.
Let $\map{\theta_i}{\mathbb{N}_{\ge 0}}{ \mc O_p }$ be the orientation
of the $i$-th camera as a function of time, where, for some prime
number $p$,
\begin{align}\label{eq:angle_set}
  \mc O_p := \left\{k\frac{2\pi}{p}: k \in \fromto{0}{p-1}\right\}.
\end{align}
We refer to $p$ as to \emph{discretization accuracy}. For notational
convenience, we define the directed graph $\subscr{\mc G}{d} =
(\subscr{\mc V}{d}, \subscr{\mc E}{d})$ associated with the camera
network $\mc G$, where $\subscr{\mc V}{d} = \mc V$, and $(i,j) \in
\subscr{\mc E}{d}$ if and only if $(i,j) \in \mc E$ and $i < j$. For
each $(i,j) \in \subscr{\mc E}{d}$, let $\eta_{ij} \in \mc O_p$ be the
\emph{relative measurement} between camera $i$ and camera $j$, that is
$\eta_{ij} = \theta_i - \theta_j$. Let $\theta$ be the vector of the
cameras orientations, and let $\eta$ be the vector of relative
measurements. Assign an arbitrary ordering to the edges $\subscr{\mc
  E}{d}$, and define the incidence matrix $B \in
\subscr{\mathbb{F}}{p}^{m \times n}$ of $\subscr{\mc G}{d}$ by
specifying the $k$-row of $B$ corresponding to the edge $(i,j)$ as
\begin{align}\label{eq:incidence}
  b_{k \ell} = 
  \begin{cases}
    1, & \text{if $\ell = i$,}\\
    -1, \qquad & \text{if $\ell = j$,}\\
    0, & \text{otherwise}.
  \end{cases}
\end{align}
Observe that $B \theta = \eta$. We consider the following problem.

\begin{problem}{\bf \emph{(Pose estimation from relative
      measurements)}}\label{problem:estimation}
  Let $\mc G$ be a camera network, and let $\subscr{\mc G}{d}$ be its
  associated directed graph. Let $p$ be the discretization accuracy,
  let $B$ be the incidence matrix of $\subscr{\mc G}{d}$, and let
  $\eta \in \mc O_p^m$ be the vector of relative
  measurements. Determine a set of cameras orientations $\theta \in
  \mc O_p^n$ satisfying $B \theta = \eta$ over
  $\subscr{\mathbb{F}}{p}$.
\end{problem}

Notice that, if the incidence matrix $B$ and the relative measurements
$\eta$ are available to some camera or central unit, then Problem
\ref{problem:estimation} requires the solution of system of linear
equations. Instead, we propose an algorithm that requires each camera
to have access to local relative measurements and to communicate with
its immediate neighbors. Our distributed pose estimation algorithm is
in Algorithm \ref{algo:estimation}.

\begin{algorithm}[ht]
\caption{\textit{Distributed pose estimation (camera $i$)}}
\label{algo:estimation}
\begin{algorithmic}
  
  {\footnotesize
    
    \Statex \textbf{Input:} Discretization accuracy $p$, Initial pose
    $\theta_i(0) \in \mc O_p$, Neighbors set $\mc N_i^\text{in}$ and
    $\mc N_i^\text{out}$, Weights $a_{ij} \in
    \subscr{\mathbb{F}}{p}^{n \times n}$ for all $j \in \mc N_i^\text{in}$,
    Number of iterations $T$ (set $T = n$ otherwise), Relative
    measurements $\eta_{ij} \in \mc O_p$ for all $j \in \mc
    N_i^\text{in}$;

    \Statex \textbf{Require:} $p$ is a prime number, $A = [a_{ij}]$
    achieves average consensus over $\subscr{\mathbb{F}}{p}$;

    \Statex \textbf{Output:} Orientation $\theta_i$ compatible with
    measurements $\eta = [\eta_{ij}]$;

    \medskip

    \For{$t = 0,\dots,T$}

    \State{Transmit $x_i (t) = \frac{p \theta_i (t)}{2 \pi}$ to $\mc
      N_i^\text{out}$;}

    \State{Receive $x_j (t) = \frac{p \theta_j (t)}{2 \pi}$ from $\mc
      N_i^\text{in}$;}

    \State{Update orientation $\theta_i (t)$ as:}

    \begin{subequations}
      \begin{align}
        x_i (t+1) &= \sum_{j \in \mathcal N^\text{in}_i} a_{ij}
        \left( x_j (t)
          \label{eq:algorithm}
          + \frac{p \,\eta_{ij}}{2\pi} \right),\\
        \theta_i (t+1) &= x_i (t+1) \frac{2\pi}{p} .\;
      \end{align}
    \end{subequations}

    \EndFor 

    \Statex{\textbf{return} Orientation $\theta_i$;}

 }
 \end{algorithmic}
\end{algorithm}

Because cameras transmit and operate only on values in the finite
field $\subscr{\mathbb{F}}{p}$, we argue that Algorithm
\ref{algo:estimation} is suitable for agents with limited
capabilities, and it is robust to transmission noise. In the next
theorem we analyze the convergence of Algorithm \ref{algo:estimation}.

\begin{theorem}{\bf \emph{(Convergence of Algorithm
      \ref{algo:estimation} with perfect measurements)}}\label{thm:algorithm}
  Let $\mc G = (\mc V, \mc E)$ be a camera network, and let
  $\subscr{\mc G}{d}$ be its associated directed graph. Let $\eta \in
  \mc O_p^m$ be the vector of relative measurements, and let $B$ be
  the incidence matrix of $\subscr{\mc G}{d}$. If $\eta \in \Image(B)$
  and $A$ achieves average consensus over $\subscr{\mathbb{F}}{p}$,
  then
  \begin{enumerate}
  \item Algorithm \ref{algo:estimation} converges in finite time, that
    is, $\tilde \theta := \theta (T) = \theta (T + \tau)$ for some
    $\tilde \theta \in \mc O_p^n$, $T < n$, and for all $\tau \in
    \mathbb{N}$, and
  \item the final network orientation is compatible with the relative
    measurements, that is, $B \tilde \theta = \eta$.
  \end{enumerate}
\end{theorem}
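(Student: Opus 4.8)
The plan is to recast Algorithm \ref{algo:estimation} as an affine iteration over $\subscr{\mathbb{F}}{p}$ and to reduce it, by a change of variables, to the autonomous consensus dynamics \eqref{eq:autonomous}. Write $x(t) = \tfrac{p}{2\pi}\theta(t) \in \subscr{\mathbb{F}}{p}^n$ and let $\bar\eta \in \subscr{\mathbb{F}}{p}^m$ be the vector with entries $\bar\eta_{ij} = \tfrac{p}{2\pi}\eta_{ij}$, so that $B\,\tfrac{p}{2\pi}\theta = \bar\eta$ is the finite-field form of $B\theta=\eta$. Collecting the per-agent updates \eqref{eq:algorithm} over all agents, the iteration of Algorithm \ref{algo:estimation} reads $x(t+1) = Ax(t) + b$, where $b_i = \sum_{j \in \mc N_i^{\text{in}}} a_{ij}\bar\eta_{ij}$ and, consistently with $\eta_{ij} = \theta_i - \theta_j$, the entries satisfy $\bar\eta_{ij} = -\bar\eta_{ji}$ in $\subscr{\mathbb{F}}{p}$.

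First I would use the hypothesis $\eta \in \Image(B)$: it guarantees the existence of some $x^\star \in \subscr{\mathbb{F}}{p}^n$ with $B x^\star = \bar\eta$, equivalently $\bar\eta_{ij} = x_i^\star - x_j^\star$ for every edge. Substituting this into $b$ and using that $A$ is row-stochastic ($\sum_j a_{ij}=1$), one obtains $b_i = \sum_j a_{ij}(x_i^\star - x_j^\star) = x_i^\star - (Ax^\star)_i$, that is, $b = (I-A)x^\star$. Hence, setting $e(t) := x(t) - x^\star$, the iteration becomes $e(t+1) = Ax(t) + (I-A)x^\star - x^\star = A(x(t) - x^\star) = A e(t)$, so $e(t) = A^t e(0)$ — precisely the autonomous dynamics \eqref{eq:autonomous}.

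Next I would invoke the results already proved. Since $A$ achieves average consensus, it achieves consensus; by Theorem \ref{thm:convergence_value} there is $T < n$ — the size of the largest Jordan block at the eigenvalue $0$ — with $A^T = \vectorones[]\pi$, and therefore $A^t = A\vectorones[]\pi = \vectorones[]\pi = A^T$ for all $t \ge T$. Consequently $x(t) = x^\star + \vectorones[]\,(\pi e(0))$ is constant for every $t \ge T$; writing $\tilde\theta := \tfrac{2\pi}{p}\bigl(x^\star + \vectorones[]\,(\pi e(0))\bigr) \in \mc O_p^n$ establishes part (i), with convergence time bounded by $n$.

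Finally, for part (ii) I would verify compatibility with the measurements. Each row of the incidence matrix $B$ in \eqref{eq:incidence} has exactly one entry equal to $1$ and one equal to $-1$, hence $B\vectorones[] = 0$. Therefore, over $\subscr{\mathbb{F}}{p}$, $B\bigl(\tfrac{p}{2\pi}\tilde\theta\bigr) = B x^\star + (B\vectorones[])(\pi e(0)) = \bar\eta + 0 = \bar\eta$, i.e. $B\tilde\theta = \eta$. The only delicate point is the first step — correctly identifying the affine term as $b=(I-A)x^\star$ — which rests on the row-stochasticity of $A$ together with the antisymmetry $\bar\eta_{ij} = -\bar\eta_{ji}$, so that $\bar\eta_{ij} = x_i^\star - x_j^\star$ holds for all in-neighbors $j$ of $i$ and not merely for the edges of $\subscr{\mc G}{d}$; everything afterward is the standard error-dynamics reduction to \eqref{eq:autonomous}.
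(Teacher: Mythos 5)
Your proof is correct, but it follows a genuinely different route from the paper's. You eliminate the affine term by exhibiting a particular solution: since $\eta \in \Image(B)$ there is $x^\star$ with $\bar\eta_{ij} = x^\star_i - x^\star_j$ (extended antisymmetrically to reversed edges, which you rightly flag as the one delicate bookkeeping point), and row-stochasticity gives $b = (I-A)x^\star$; the shift $e(t) = x(t) - x^\star$ then reduces Algorithm \ref{algo:estimation} exactly to the autonomous dynamics \eqref{eq:autonomous}, so part (i) is Theorem \ref{thm:convergence_value} and part (ii) is immediate from $B\vectorones[] = 0$. The paper instead keeps the forced form $x(t+1) = Ax(t) + LBv$ with an explicit gain matrix $L$, expands $x(t) = A^t x(0) + \sum_{\tau=0}^{t-1} A^\tau L B v$, proves the increments vanish after time $T$ via the identity $\vectorones[]^\transpose L B = 0$ (which uses $\vectorones[]^\transpose A = \vectorones[]^\transpose$, i.e.\ the average-consensus hypothesis), and obtains (ii) through a fixed-point argument using $I - A = LB$ and $\Ker(LB) = \Image(\vectorones[])$, the latter resting on $1$ being a simple eigenvalue. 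Your argument is shorter and in fact needs only that $A$ is a row-stochastic consensus matrix --- column-stochasticity plays no role --- so it establishes the statement under a nominally weaker hypothesis; what the paper's heavier setup buys is reusable machinery: the matrix $L$ and the partial-sum expansion are exactly what Theorem \ref{thm:algorithm_noise} needs in the noisy case, where no particular solution $x^\star$ exists and the residual error is expressed through $\sum_{\tau=0}^{T-1} A^\tau L$.
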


\begin{proof} 
  Consider the update law \eqref{eq:algorithm}, and notice that it can
  be written as $x(t+1) = A x(t) + L B v$, where $\frac{p
    \,\eta}{2\pi} = Bv$ for some vector $v \in
  \subscr{\mathbb{F}}{p}^n$ ($y \in \Image(B)$ by assumption), $L \in
  \subscr{\mathbb{F}}{p}^{n \times m}$, and the $k$-th column of $L$
  corresponding to the edge $(i,j) \in \subscr{\mc E}{d}$ is specified
  as
  \begin{align}\label{eq:incidence}
    l_{\ell k} = 
    \begin{cases}
      a_{ij}, & \text{if $\ell = i$,}\\
      -a_{ij}, \qquad & \text{if $\ell = j$,}\\
      0, & \text{otherwise}.
    \end{cases}
  \end{align}
  Observe that
  \begin{align*}
    (L B)_{ij} =
    \begin{cases}
      \sum_{k \in \mathcal{N}_i^\text{in}  }  a_{i k}, \qquad & \text{if $i = j$,}\\
      - a_{ij}, & \text{if $j \in \mathcal{N}_i^\text{in} $,}\\
      0, & \text{otherwise.}
    \end{cases}
  \end{align*}
  so that $LB \vectorones[] = 0$ (asymmetric Laplacian matrix of $\mc G$
  \cite{FB-JC-SM:09}). Notice that $x(t) = A^t x(0) + \sum_{\tau =
    0}^{t-1} A^\tau L Bv$. Since $A$ achieves average consensus, we
  have $A^T = n^{p-2} \vectorones[] \vectorones[]^\transpose$ for
  some $T < n$. Thus, $A^{t} L B = n^{p-2} \vectorones[]
  \vectorones[]^\transpose L B$ for all $t \ge T$. We now show that
  $\vectorones[]^\transpose L B = 0$, from which statement (i)
  follows. Since $A$ achieves average consensus, it follows from
  Theorem \ref{thm:average} that $A \vectorones[] = \vectorones[]$ and
  $\vectorones[]^\transpose A = \vectorones[]^\transpose$. Hence, for each
  node $k \in \mc V$, $\sum_{j = 1}^n a_{k j} = \sum_{j = 1}^n a_{j
    k}$, and, consequently, $\vectorones[]^\transpose L B = 0$.

  Let $\tilde x$ be a fixed point of \eqref{eq:algorithm}, that is,
  $(I -A) \tilde x = L B v$. Since $A \vectorones[] = \vectorones[]$,
  it follows that $I = \text{diag}( \sum_{j} a_{1,j},\dots, \sum_{j}
  a_{n,j})$, and $I - A = L B$. Then $L B (\tilde x - v) = 0$ for
  every fixed point $\tilde x$ of \eqref{eq:algorithm}. Because $A$ is
  a consensus matrix, $1$ is a simple eigenvalue of $A$, and $A
  \vectorones[] = \vectorones[]$. Then $\Ker (L B) = \Ker(I - A) =
  \Image (\vectorones[])$. Notice that, by construction, $\eta =
  \frac{2 \pi}{p}B v$, $\tilde \theta = \frac{2 \pi}{p} \tilde x$, and
  $B \vectorones[] = 0$. Consequently, $(\tilde \theta - \eta) \in
  \Image (\vectorones[])$, and statement (ii) follows.
\end{proof}

In Theorem \ref{thm:algorithm} we assume that the measurements satisfy
$\eta \in \Image (B)$, or, equivalently, that the measurements are not
affected by noise. While this assumption is justified by the fact that
we only consider discretized measurements, in what follows we study
the evolution of Algorithm \ref{algo:estimation} when the measurements
are affected by noise. Let $e (t) = \eta - B \theta(t)$.

\begin{theorem}{\bf \emph{(Convergence of Algorithm
      \ref{algo:estimation} with noisy measurements)}}\label{thm:algorithm_noise}
  Let $\mc G = (\mc V, \mc E)$ be a camera network, and let
  $\subscr{\mc G}{d}$ be its associated directed graph. Let $\eta \in
  \mc O_p^m$ be the vector of relative measurements, and let $B$ be
  the incidence matrix of $\subscr{\mc G}{d}$. If $A$ achieves average
  consensus over $\subscr{\mathbb{F}}{p}$, then
  \begin{enumerate}
  \item there exists a finite time $T < n$ such that the estimation
    error is constant, that is, $e(t) = e (t + 1)$ for all $t \ge T$,
    and
  \item for all $t \ge T$, the estimation error satisfies
    \begin{align*}
      e (t) = \left( I - B \sum_{\tau =0}^{T - 1} A^{\tau} L \right) \subscr{\eta}{orth},
    \end{align*}
    where $\subscr{\eta}{orth}$ is the orthogonal projection of $\eta$
    onto $\Image(B)^\perp$, and $L$ is as in \eqref{eq:incidence}.
  \end{enumerate}
\end{theorem}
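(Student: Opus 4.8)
The plan is to follow the template of the proof of Theorem~\ref{thm:algorithm}, this time keeping track of the component of $\eta$ that lies outside $\Image(B)$. First I would put the orientation update \eqref{eq:algorithm} in matrix form. Writing $\hat\eta = \frac{p}{2\pi}\eta \in \subscr{\mathbb{F}}{p}^m$, the update reads $x(t+1) = A x(t) + L\hat\eta$ with $L$ as in \eqref{eq:incidence}, so that
\begin{align*}
  x(t) = A^t x(0) + \sum_{\tau = 0}^{t-1} A^\tau L\hat\eta .
\end{align*}
Since $x(t) = \frac{p}{2\pi}\theta(t)$, the error obeys $\frac{p}{2\pi} e(t) = \hat\eta - B x(t)$, so it suffices to analyze $B x(t)$.

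Next I would exploit that $A$ achieves average consensus. By Theorems~\ref{thm:convergence_value} and~\ref{thm:average}, there is $T < n$ (the size of the largest Jordan block of the eigenvalue $0$) with $A^T = \vectorones[]\pi$ and $\pi = n^{p-2}\vectorones[]^\transpose$; since $A\vectorones[] = \vectorones[]$ this yields $A^\tau = \vectorones[]\pi$ for every $\tau \ge T$. Moreover every column of $L$ attached to an edge $(i,j)$ has its only nonzero entries equal to $a_{ij}$ in row $i$ and $-a_{ij}$ in row $j$, hence $\vectorones[]^\transpose L = 0$ and therefore $A^\tau L = \vectorones[]\pi L = 0$ for all $\tau \ge T$. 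Plugging these facts in, for every $t \ge T$
\begin{align*}
  x(t) = A^T x(0) + \sum_{\tau = 0}^{T-1} A^\tau L\hat\eta,
\end{align*}
which does not depend on $t$. Consequently $B x(t)$, and hence $e(t)$, is constant for $t \ge T$, which is statement (i).

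For statement (ii) I would evaluate $e(t)$ for $t \ge T$. Using $B\vectorones[] = 0$ (each row of an incidence matrix sums to zero) we get $B A^T x(0) = B\vectorones[]\pi x(0) = 0$, hence $\frac{p}{2\pi} e(t) = \bigl(I - B\sum_{\tau=0}^{T-1} A^\tau L\bigr)\hat\eta$. It then remains to check that this operator annihilates $\Image(B)$. From the computation in the proof of Theorem~\ref{thm:algorithm}, row-stochasticity of $A$ gives $LB = I - A$, so the sum telescopes,
\begin{align*}
  \sum_{\tau = 0}^{T-1} A^\tau L B = \sum_{\tau = 0}^{T-1}\bigl(A^\tau - A^{\tau+1}\bigr) = I - A^T .
\end{align*}
Thus, for any $v$ and $\hat\eta_B := B v$, we obtain $B\sum_{\tau=0}^{T-1} A^\tau L \,\hat\eta_B = B(I - A^T) v = B v = \hat\eta_B$ (again by $B A^T = 0$), i.e. $\bigl(I - B\sum_{\tau=0}^{T-1} A^\tau L\bigr)\hat\eta_B = 0$. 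Decomposing $\hat\eta = \hat\eta_B + \subscr{\hat\eta}{orth}$ with $\hat\eta_B \in \Image(B)$ and $\subscr{\hat\eta}{orth}$ its complementary (orthogonal) part, only the second term survives; rescaling by $\frac{2\pi}{p}$ turns this into the claimed identity for $e(t)$ and $\subscr{\eta}{orth}$.

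I expect the real work to be bookkeeping rather than conceptual: carrying the $\frac{2\pi}{p}$ factor consistently among $x$, $\theta$, $\eta$ and $e$, and stating in what sense $\eta$ admits the splitting $\eta_B + \subscr{\eta}{orth}$ over $\subscr{\mathbb{F}}{p}$ --- notice that the argument only uses $\eta_B \in \Image(B)$, so \emph{any} complement of $\Image(B)$ works and the ``orthogonal'' qualifier is merely a convenient normalization. The genuinely algebraic ingredients, namely $A^\tau L = 0$ for $\tau \ge T$ and the telescoping identity $\sum_{\tau=0}^{T-1} A^\tau L B = I - A^T$, are immediate once $LB = I - A$ and the eventual idempotence $A^\tau = \vectorones[]\pi$ are available, both of which are already in hand from Theorems~\ref{thm:convergence_value} and~\ref{thm:average} and the proof of Theorem~\ref{thm:algorithm}.
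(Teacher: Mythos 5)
Your proof is correct and follows essentially the same route as the paper: write the update in matrix form, use $A^T = n^{p-2}\vectorones[]\vectorones[]^\transpose$ together with $B\vectorones[] = 0$ (equivalently $\vectorones[]^\transpose L = 0$) to get constancy of the error for $t \ge T$, and then decompose $\eta$ and annihilate its $\Image(B)$-component. The only difference is cosmetic: where the paper invokes Theorem \ref{thm:algorithm} and linearity to obtain $\subscr{\eta}{par} = B \sum_{\tau=0}^{T-1} A^\tau L\, \subscr{\eta}{par}$, you verify the same identity directly by telescoping $\sum_{\tau=0}^{T-1} A^\tau L B = I - A^T$ from $LB = I - A$ and $BA^T = 0$, which also makes explicit (as you note) that any complement of $\Image(B)$ would serve in place of the orthogonal one.
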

\begin{proof}
  With the same notation as in the proof of Theorem
  \ref{thm:algorithm}, let $T < n$ be the number of iterations needed
  for convergence of the network matrix $A$, that is, $A^T = n^{p-2}
  \vectorones[]\vectorones[]^\transpose$. Notice that $B A^t = 0$ for
  all $t \ge T$, so that
  \begin{align*}
    e(t) &= \eta - A^t x(0) - \sum_{\tau = 0}^{t - 1} A^{\tau} L \eta
    = \eta - \sum_{\tau = 0}^{T - 1} A^{\tau} L \eta,
  \end{align*}
  for all $t \ge T$, and statement (i) follows. To show statement
  (ii), let $\eta = \subscr{\eta}{par} + \subscr{\eta}{orth}$, where
  $\subscr{\eta}{par} \in \Image(B)$, and $\subscr{\eta}{orth} \in
  \Image(B)^\perp$. From the linearity of \eqref{eq:algorithm} and
  Theorem \ref{thm:algorithm} we have
  \begin{align*}
    \subscr{\eta}{par} = B \sum_{\tau = 0}^{T - 1} A^{\tau} L
    \subscr{\eta}{par},
  \end{align*}
  which concludes the proof.
\end{proof}

Theorem \ref{thm:algorithm_noise} characterize the performance of
Algorithm \ref{algo:estimation} when the measurements are affected by
noise. Notice that the estimation error can be minimized by properly
choosing the network matrix $A$. We now conclude with a numerical
example.

% \begin{figure}
%     \centering
%     \includegraphics[width=.6\columnwidth]{./camera_network}
%     \caption{An undirected camera network with $4$ cameras.}
%     \label{fig:camera_network}
% \end{figure}

\begin{example}{\bf \emph{(Average computation and pose estimation via
      finite-field consensus)}}\label{example_numerical}
  Consider a camera network with $4$ cameras configured in a circle
  topology
  % and incidence matrix ()
  % \begin{align*}
  %   B =
  %   \begin{bmatrix}
  %     1 & -1 & 0 & 0\\
  %     1 & 0 & -1 & 0\\
  %     0 & 1 & 0 & -1\\
  %     0 & 0 & 1 & -1
  %   \end{bmatrix}.
  % \end{align*}
  % Let the
  and network matrix
\begin{align*}
  A = 
  \begin{bmatrix}
    0  &   4  &   2 &    0\\
    1  &   1  &   0 &    4\\
    0  &   0  &   2 &    4\\
    0  &   1  &   2 &    3
  \end{bmatrix}
  \in \subscr{\mathbb{F}}{5}^{4 \times 4} .
\end{align*}
It can be verified that $A$ achieves average consensus over
$\subscr{\mathbb{F}}{5}$ in at most $3$ iterations. In
Fig. \ref{fig:convergence_average} we show that the network matrix $A$
allows for the computation of the real-valued average of the agents
initial states in $3$ iterations (see Algorithm \ref{algo:average} and
Section \ref{sec:analysis}).

Let $\subscr{A}{k} \in \subscr{\mathbb{F}}{5}^{1024 \times 1024}$ be
the network matrix generated from $A$ as $\subscr{A}{k} = A \otimes A
\otimes A \otimes A \otimes A$. The sparsity pattern of
$\subscr{A}{k}$ is reported in Fig. \ref{fig:sparsity}. In
Fig. \ref{fig:pose_estimation} we validate our distributed pose
estimation algorithm (see Section \ref{sec:pose_estimation}).
{\hfill\QED}
\end{example}

\begin{figure}[tb]
  \centering \subfigure[]{
    \includegraphics[width=.45\columnwidth]{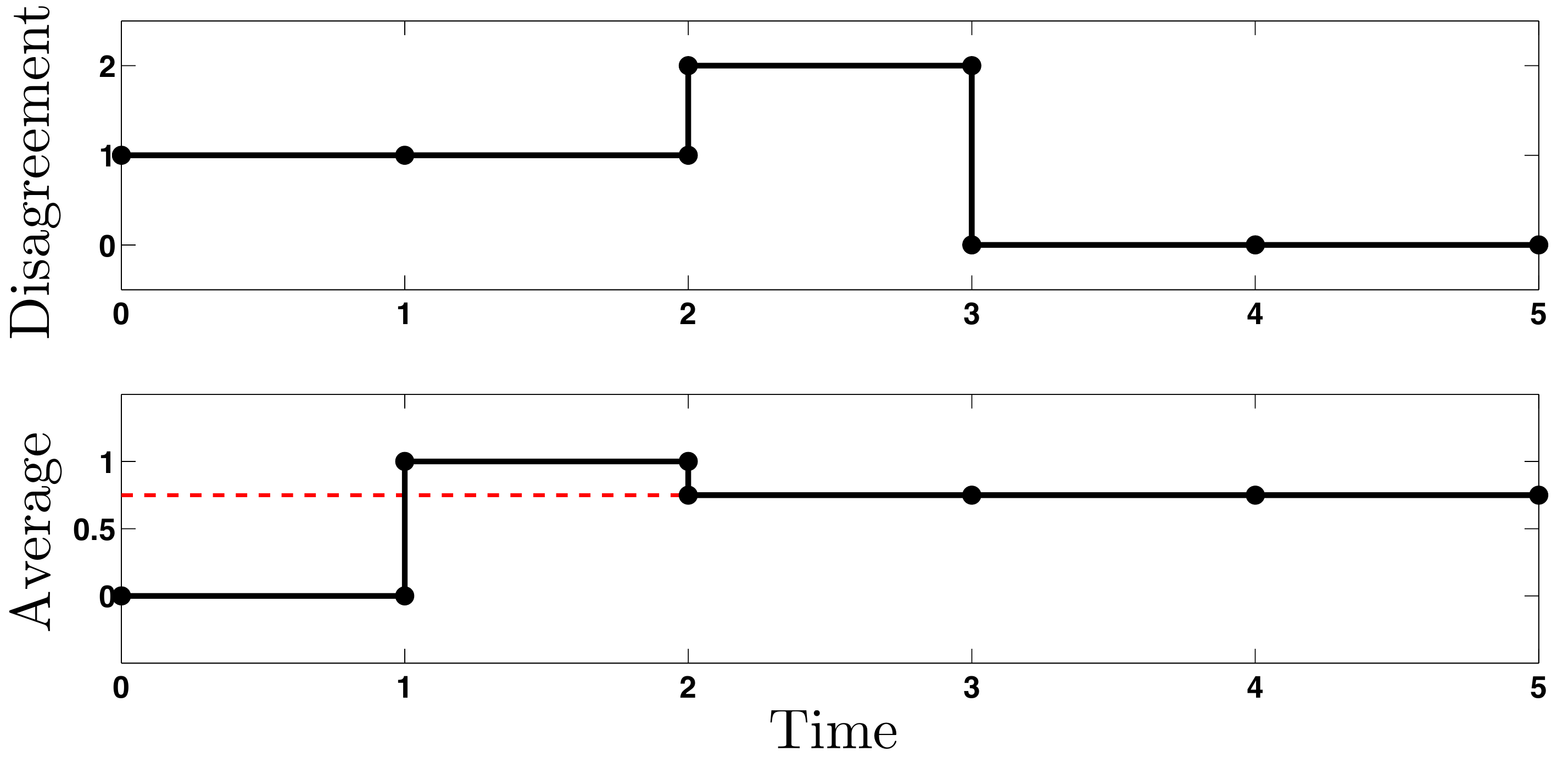}
    \label{fig:convergence_average}
  } \subfigure[]{
    \includegraphics[width=.45\columnwidth]{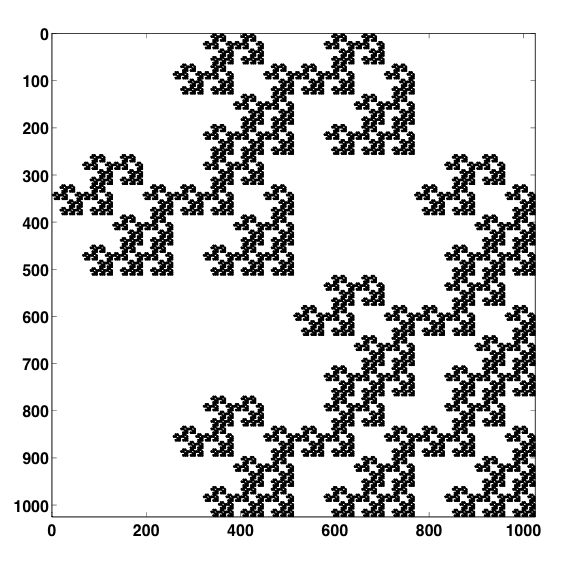}
    \label{fig:sparsity}
  }
  \caption[Optional caption for list of figures]{For the network
    described by the matrix $A$ in Example \ref{example_numerical},
    let $0,1,1,1$ be the agents initial states, respectively. Agents
    implement Algorithm \ref{algo:average}. In Fig.
    \ref{fig:convergence_average} (above), we report the network
    disagreement as a function of time, where the disagreement at a
    given time equals the largest agents state minus the smallest
    agents state. Notice that the network achieves consensus in $3$
    iterations (see Section \ref{sec:analysis}). In Fig.
    \ref{fig:convergence_average} (below), we report the average
    computed by the first agent (solid black) as a function of time
    (see Section \ref{sec:average}). Notice that the first agent, and
    hence every agent in the network, computes the average of the
    initial states (dashed red) at the third iteration. %
    Fig. \ref{fig:sparsity} shows the sparsity pattern of the matrix
    $\subscr{A}{k} \in
      \subscr{\mathbb{F}}{5}^{1024 \times 1024}$ in Example
      \ref{example_numerical}. The network defined by $\subscr{A}{k}$
      has $1024$ nodes and $10^5$ edges.}
  \label{fig:sub1}
\end{figure}

% \begin{figure}
%     \centering
%     \includegraphics[width=.6\columnwidth]{./convergence_average}
%     \caption{For the network described by the matrix $A$ in Example
%       \ref{example_numerical}, let $0,1,1,1$ be the agents initial
%       states, respectively. Agents implement Algorithm
%       \ref{algo:average}. In the figure above, we report the network
%       disagreement as a function of time, where the disagreement at a
%       given time equals the largest agents state minus the smallest
%       agents state. Notice that the network achieves consensus in $3$
%       iterations (see Section \ref{sec:analysis}). In the figure below, we
%       report the average computed by the first agent (solid black) as
%       a function of time (see Section \ref{sec:average}). Notice that
%       the first agent, and hence every agent in the network, computes
%       the average of the initial states (dashed red) at the third
%       iteration.}
%     \label{fig:convergence_average}
% \end{figure}

% \begin{figure}
%     \centering
%     \includegraphics[width=.6\columnwidth]{./sparsity}
%     \caption{Sparsity pattern of the matrix $\subscr{A}{k} \in
%       \subscr{\mathbb{F}}{5}^{1024 \times 1024}$ in Example
%       \ref{example_numerical}. The network defined by $\subscr{A}{k}$
%       has $1024$ nodes and $10^5$ edges.}
%     \label{fig:sparsity}
% \end{figure}

\begin{figure}
    \centering
    \includegraphics[width=.6\columnwidth]{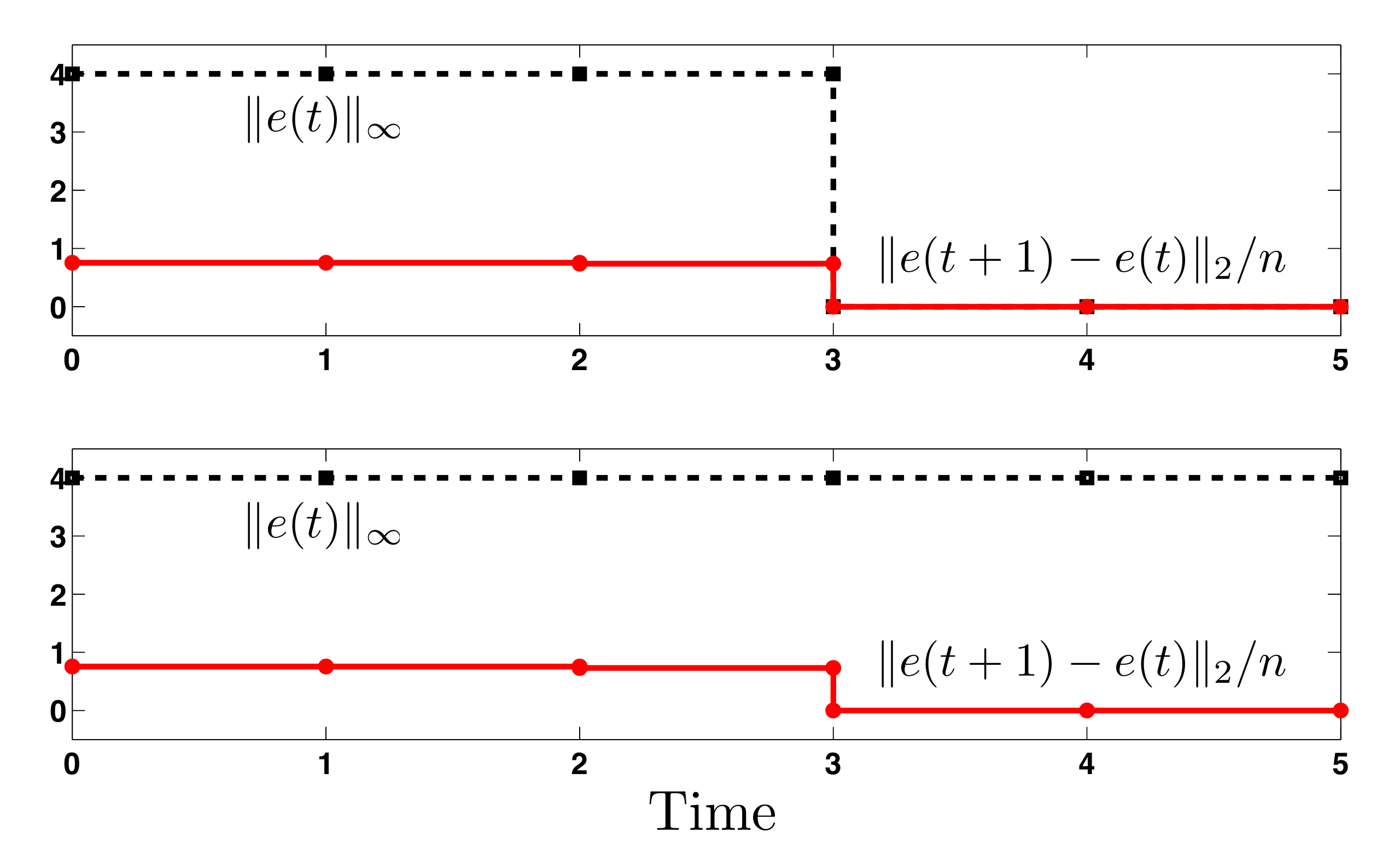}
    \caption{For the camera network with $1024$ cameras described by
      the matrix $\subscr{A}{k}$ in Example \ref{example_numerical},
      this figure show the effectiveness of the our pose estimation
      algorithm in Algorithm \ref{algo:estimation}. The figure above
      considers the noiseless case (Theorem \ref{thm:algorithm}). The
      dashed black line denotes the infinity norm of the estimation
      error, while solid red line corresponds to the (normalized) two
      norm of the difference between two consecutive estimation
      errors. We conclude that, in the absence of measurement noise,
      Algorithm \ref{algo:estimation} converges in $3$ iterations to a
      configuration compatible with the relative measurements. The
      figure below considers the case of noisy measurements (Theorem
      \ref{thm:algorithm_noise}). Notice that, since the measurements
      are affected by noise, the estimation error remains nonzero
      (dashed black). However, the algorithm converges to a
      configuration with constant estimation error. Indeed, the
      difference between two consecutive estimation errors is zero
      after $3$ iterations (solid red).}
    \label{fig:pose_estimation}
\end{figure}

\section{Conclusion and Future Work}\label{sec:conclusion}
In this paper we propose a distributed consensus algorithm for agents
with limited memory, computation, and communication capabilities. Our
approach is based on finite-fields, where agents states lie in a
finite set, and operations are performed according to modular
arithmetic. For our algorithm we identify necessary and sufficient
convergence conditions, and we characterize the convergence
time. Additionally, we discuss several network design methods, and we
propose some application scenarios. Our work proposes a novel class of
consensus dynamics, which are advantageous in several applications,
and it complements the existing literature on real-valued consensus.

Through our analysis we show that finite-field consensus networks
outperforms their real-valued counterpart in many aspects, including
the convergence speed, robustness to communication errors, and agents
requirements. These advantages come at the expenses of a more
convoluted network design, which we identify as an interesting
research direction. In particular, distributed network design
algorithms, as well as gossip and asynchronous protocols should be
investigated to broaden the applicability of finite-field consensus
networks. Other theoretical research questions include characterizing
the existence of finite-field consensus weights for a given
interconnection structure and field characteristic, as well as the
design of fastest finite-field consensus networks with fixed
agents interconnection graph.

\renewcommand{\theequation}{A-\arabic{equation}}
  % redefine the command that creates the equation no.
\setcounter{equation}{0}  % reset counter 
\section*{APPENDIX}  % use *-form to suppress numbering

%\subsection{Proof of Theorem \ref{thm:char_pol}}\label{sec:proof}
Before proving Theorem \ref{thm:char_pol}, we recall the following
fundamental results and facts in linear algebra.

\begin{theorem}{\bf \emph{(Primary decomposition
      theorem \cite{GES:77})}}\label{thm:primary}
  Let $\map{A}{\mc V}{\mc V}$ be a linear operator on some vector
  space $\mc V$ over some field $\mathbb{F}$, and let $p(s) =
  \prod_{i=1}^r p_i (s)$ be an annihilating polynomial for $A$ with
  degree greater than $1$, for some relatively prime polynomials
  $p_1,\dots,p_r$. Then
  \begin{enumerate}
  \item $\mc W_i = \Ker (p_i (A))$ is a $A$-invariant subspace for all $i
    \in \until{r}$,
  \item $\mc V = \mc W_1 \oplus \mc W_2 \oplus \dots \oplus \mc W_r$,
    where $\oplus$ denote the direct sum operator, and
  \item if $\prod_{i=1}^r p_i (s)= P_A (s)$ and $A_i$ is the restriction of
    $A$ to $\mc W_i$, then $p_i$ is the characteristic polynomial of
    $A_i$.
  \end{enumerate}
\end{theorem}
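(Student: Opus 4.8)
The plan is to establish the three claims by the classical spectral-idempotent construction, working in the polynomial ring $\mathbb{F}[s]$, which is a principal ideal domain. Throughout I would write $q_i(s) := \prod_{j \neq i} p_j(s)$, so that $p(s) = p_i(s)\, q_i(s)$ for each $i \in \until{r}$. Claim~(i) is immediate and I would dispatch it first: every polynomial in $A$ commutes with $A$, so if $p_i(A)v = 0$ then $p_i(A)(Av) = A\,p_i(A)\,v = 0$, whence $\mc W_i = \Ker(p_i(A))$ is $A$-invariant.

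For claim~(ii) the first step is to note that $\gcd(q_1,\dots,q_r) = 1$: any irreducible dividing every $q_i$ would divide some $p_j$ (as it divides $q_1$) and some $p_k$ with $k \neq j$ (as it divides $q_j$), contradicting pairwise coprimality of $p_1,\dots,p_r$. B\'ezout's identity in $\mathbb{F}[s]$ then furnishes polynomials $h_1,\dots,h_r$ with $\sum_{i=1}^r h_i(s)\, q_i(s) = 1$; evaluating at $A$ gives $\sum_{i=1}^r E_i = I$, where $E_i := h_i(A)\, q_i(A)$. The core of the argument is then to verify that $\{E_i\}$ is a family of orthogonal idempotents: for $i \neq j$ the product $q_i(s)q_j(s)$ is divisible by $p(s)$, so $q_i(A)q_j(A) = 0$ and hence $E_iE_j = 0$, and therefore $E_i = E_i\bigl(\sum_j E_j\bigr) = E_i^2$. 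This gives $\mc V = \bigoplus_{i} \Image(E_i)$, and it remains to identify $\Image(E_i)$ with $\mc W_i$: on one hand $p_i(A)E_i = h_i(A)\,p_i(A)q_i(A) = h_i(A)\,p(A) = 0$, so $\Image(E_i) \subseteq \Ker(p_i(A))$; on the other, if $p_i(A)v = 0$ then $p_i \mid q_j$ for every $j \neq i$ forces $E_j v = 0$, so $v = \sum_k E_k v = E_i v \in \Image(E_i)$. Hence $\mc V = \bigoplus_i \mc W_i$.

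For claim~(iii), suppose in addition $\prod_i p_i = P_A$ and set $A_i := A|_{\mc W_i}$. Since $\mc V = \bigoplus_i \mc W_i$ is an $A$-invariant decomposition, $A$ is block-diagonal in an adapted basis, so $P_A = \prod_i P_{A_i}$, giving the identity $\prod_i P_{A_i} = \prod_i p_i$. Moreover $p_i(A_i) = 0$ because $\mc W_i = \Ker(p_i(A))$, so the minimal polynomial of $A_i$ divides $p_i$, and consequently every irreducible factor of $P_{A_i}$ divides $p_i$ (characteristic and minimal polynomials share the same irreducible factors over any field). Since the $p_i$ are pairwise coprime, each irreducible $\pi$ divides at most one $p_i$; comparing the multiplicity of $\pi$ on the two sides of $\prod_i P_{A_i} = \prod_i p_i$ then shows, for the unique index $i_0$ with $\pi \mid p_{i_0}$, that $\pi$ occurs with equal multiplicity in $P_{A_{i_0}}$ and in $p_{i_0}$, and with multiplicity $0$ in the other $P_{A_i}$. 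As all these polynomials are monic, equality of every irreducible multiplicity yields $P_{A_i} = p_i$ for each $i$.

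The main obstacle is the construction in the middle paragraph: the divisibility bookkeeping ($p \mid q_iq_j$ and $p_i \mid q_j$ for $i \neq j$) that turns the B\'ezout coefficients into genuinely orthogonal idempotents whose images are exactly the kernels $\Ker(p_i(A))$. This is the only place where the full strength of pairwise coprimality of the $p_i$ is used; once it is in place, claim~(iii) reduces to a short comparison of irreducible multiplicities, and claim~(i) is essentially free.
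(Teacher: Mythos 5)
Your proof is correct. The paper does not prove this theorem at all --- it is quoted as a classical result with a citation --- and your B\'ezout coefficient / orthogonal-idempotent argument (including the multiplicity comparison for part (iii)) is precisely the standard textbook proof, so there is nothing to reconcile with the paper's text.
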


Recall that the order of a polynomial $g \in \mathbb{F}[s]$, denoted
by $\text{ord}(g)$, is the smallest positive integer $r$ such that
$g(s)$ divides $s^r - 1$ over $\mathbb{F}$, that is, the smallest
positive integer $r$ such that there exists $q\in\mathbb{F}[s]$
satisfying $s^r - 1 = g(s) q(s)$.

\begin{theorem}\emph{\textbf{(Order of a polynomial over finite field
      \cite{EB:59})}}\label{lemma: order_poly}  Let $g\in
  \mathbb{F} [s]$ be an irreducible polynomial satisfying
  $g(0)\neq 0$ and $\text{ord}(g) = e$. Consider $f = g^s$, and $t$ is
  the smallest integer such that $p^t \geq s$, then $\text{ord}(f) = e p^t$.
\end{theorem}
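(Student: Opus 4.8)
The plan is to translate the claim about $\text{ord}(f)$ into a bookkeeping question on the exact power of the irreducible $g$ dividing $s^r-1$ over $\mathbb{F}$, which has characteristic $p$. First I would record two preliminaries. The order $\text{ord}(f)$ is well defined because $f(0)=g(0)^s\neq 0$, so $f$ is coprime to $s$ and hence divides some $s^r-1$. And $\gcd(e,p)=1$: if instead $e=pe'$, then in characteristic $p$ one has $s^e-1=(s^{e'}-1)^p$, so $g\mid(s^{e'}-1)^p$, and irreducibility of $g$ forces $g\mid s^{e'}-1$, contradicting the minimality of $e=\text{ord}(g)$.

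Next I would prove the key multiplicity lemma. Since $\gcd(e,p)=1$ and $g(0)\neq 0$, the polynomial $s^e-1$ is squarefree: its derivative $es^{e-1}$ is, up to the unit $e$, a power of $s$, which is coprime to $s^e-1$. Hence $g$ divides $s^e-1$ with multiplicity exactly one. More generally, for an arbitrary positive integer $r$ write $r=p^{j}r'$ with $\gcd(r',p)=1$; the Frobenius identity gives $s^r-1=(s^{r'}-1)^{p^{j}}$, the factor $s^{r'}-1$ is squarefree by the same derivative argument, and $g\mid s^{r'}-1$ if and only if $e\mid r'$ (a standard property of $\text{ord}(g)=e$). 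Therefore the largest power of $g$ dividing $s^r-1$ is $g^{p^{j}}$ when $e\mid r'$ and $g^{0}$ otherwise.

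Finally I would assemble the answer: by the lemma, $f=g^{s}$ divides $s^r-1$ precisely when $e\mid r'$ and $p^{j}\ge s$, where $r=p^{j}r'$, $\gcd(r',p)=1$. These two constraints decouple — one acts on the $p$-free part $r'$, the other on the $p$-part $p^{j}$ of $r$ — and $r=p^{j}r'$ is monotone increasing in each. The smallest admissible $r'$ is $e$ itself (a positive multiple of $e$, coprime to $p$ by the first step), and the smallest admissible exponent is $j=t$ by definition of $t$; hence the minimal such $r$ is $e\,p^{t}$, i.e.\ $\text{ord}(f)=e\,p^{t}$. I do not expect a deep obstacle here; the delicate point is the multiplicity bookkeeping in the lemma — correctly combining the Frobenius identity with squarefreeness to pin down the \emph{exact} exponent of $g$ in $s^r-1$ — followed by the easy but worth-stating remark that minimizing $r$ separates into the two independent conditions.
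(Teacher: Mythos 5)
Your argument is correct. Note, however, that the paper itself gives no proof of this statement: it is recalled as a known fact and attributed to the finite-fields literature (the citation to Elspas; the result is the standard order-of-a-power theorem, e.g.\ Lidl--Niederreiter, Theorem 3.8), and it is only \emph{used} in the appendix to prove Theorem \ref{thm:char_pol}. So there is no in-paper proof to compare against; what you have written is essentially the textbook argument, correctly assembled: $\gcd(e,p)=1$ via Frobenius and irreducibility, squarefreeness of $s^{r'}-1$ for $p\nmid r'$ by the derivative criterion, the exact multiplicity $p^{j}$ of $g$ in $s^{r}-1=(s^{r'}-1)^{p^{j}}$ when $e\mid r'$ (and $0$ otherwise), and then the decoupled minimization giving $e\,p^{t}$. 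The only presentational difference from the usual textbook route is that you minimize directly over all $r$, whereas the classical proof first shows $\operatorname{ord}(f)$ divides $e\,p^{t}$ and is divisible by $e$, then pins down the $p$-adic exponent; the two are equivalent, and your version is self-contained and arguably cleaner.
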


\smallskip
We are now ready to prove Theorem \ref{thm:char_pol}.

\begin{pfof}{Theorem \ref{thm:char_pol}}
  Let $P_A \in \subscr{\mathbb{F}}{p}[s]$ be the characteristic
  polynomial of $A$, and notice that $P_A$ can be written as
  \begin{align}\label{eq: decomp}
      P_A(s) = \text{det}(sI - A) = s^h \bar P(s),
  \end{align}
  for some $h \in \mathbb{N}_{\ge 0}$, and $\bar
  P(s)\in\subscr{\mathbb{F}}{p}[s]$ with $\bar P(0) \neq 0$.
  % By leveraging on equation \eqref{eq: decomp}, we show equivalence
  % between statements (i) and (ii).

  \noindent
  \textit{(i) $\implies$ (ii)} % Let $A$ describe a consensus matrix
  % over $\subscr{\mathbb{F}}{p}$, and let $P(s) = s^h \bar P(s)$ be its
  % characteristic polynomial, with $\bar{P}(0) \neq 0$.
  Due to Lemma \ref{lemma: stochastic}, we have $1 \in
  \subscr{\sigma}{p} (A)$. Thus, we factorize $P_A$ in irreducible
  polynomials as
  \begin{align*}
    P_A (s) = (s - 1)^k \prod_{j = 1}^r Q_j(s)^{m_j},
  \end{align*}
  where $k,m_j\in\mathbb{N}$ are given by the algebraic multiplicity
  of the corresponding eigenvalue.

  We start by showing that $k = 1$. Assume by contradiction that $k >
  1$.  Let $\mc W_2 = \Ker ((I - A)^k)$, and let $A_2$ be the
  restriction of $A$ to $\mc W_2$. Recall from \cite{RAHT:05} that the
  cycle structure of the transition graph $\mc G_2$ of $A_2$ is
  \begin{align}\label{eq:cycles}
    \text{Cycles} (\mc G_2) = C_1 + \sum_{i = 1}^k (p^i - p^{i - 1}) C_{1},
  \end{align}
  where the sums of cycles is just the corresponding union graph, and
  $C_1$ denotes a unitary cycle, that is, a fixed point for $A$. From
  \eqref{eq:cycles} it follows that, if $k > 1$, then the number of
  cycles in $\mc G_2$ is strictly greater than $p$. By Theorem
  \ref{thm:transition} we conclude that $k = 1$.

  We now show that $r = 0$. Assume by contradiction that $r > 0$. Let
  $\mc W_3 = \Ker (Q_j(A)^{m_j})$, and let $A_3$ be the restriction of
  $A$ to $\mc W_3$. Then the cycle structure of the transition graph
  $\mc G_3$ of $A_3$ is
  \begin{align*}
    \text{Cycles} (\mc G_3) = C_1 + \sum_{i = 1}^{m_j}
    \frac{p^{\textup{deg} (Q_j) i} - p^{\textup{deg} (Q_j) (i -
        1)}}{\ell_i} C_{\ell_i},
  \end{align*}
  where $\ell_i = \textup{ord}(Q_j^{n_j}) \geq \textup{deg} (Q_j) \geq
  1$ from Theorem \ref{lemma: order_poly}, and $\textup{deg}(\cdot)$
  denotes the degree of a polynomial. Since the graph structure of $A$
  is given by the product of the graphs associated with the
  irreducible factors of its characteristic polynomial \cite{RAHT:05},
  the number of cycles is greater than $p$ whenever either $k > 1$ or
  $r > 0$ (see Example \ref{example:char_pol} and
  Fig. \ref{fig:transition3}).

  \noindent
  \textit{(ii) $\implies$ (i)} Let $\mc W_1 = \Ker (A - I) =
  \Image(\vectorones[])$ and recall from Theorem \ref{thm:primary}
  that $\mc W_1$ is $A$-invariant. Let $V = [V_1 \, \vectorones[]]$ be
  an invertible matrix, where the columns of $V_1$ are a basis for
  $\mc W_1^\perp$. Then we have
  \begin{align*}
    \tilde A = V^{-1} A V =
    \begin{bmatrix}
      A_{11} & 0\\
      A_{21} & 1
    \end{bmatrix}.
  \end{align*}
  Since the eigenvalues of a matrix are not affected by similarity
  transformations, the characteristic polynomial of the matrix
  $A_{11}$ is $s^{n-1}$, so that $A_{11}$ is nilpotent. It follows
  that every vector in $\mc W_1^\perp$ converges to the origin in at
  most $n-1$ iterations, while vectors in $\mc W_1$ (consensus
  vectors) are fixed points for the matrix $A$. This concludes the
  proof.
\end{pfof}

\bibliographystyle{unsrt}
\bibliography{alias,Main,FB}

\end{document}